\def\orcidID#1{\href{http://orcid.org/#1}{\protect\raisebox{-1.25pt}{\protect\includegraphics{
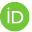%
}}}}
\newcommand{\KK}[2][]{\todo[color=purple!20,#1]{KK: #2}}
\newcommand{\AD}[2][]{\todo[color=cyan!30,#1]{AD: #2}}
\newcommand{\eqv}{\mathrel{\mkern3mu\Leftrightarrow\mkern3mu}}
\newcommand{\imp}{\mathrel{\mkern3mu\Rightarrow\mkern3mu}}
\newcommand{\nimp}{\mathrel{\mkern3mu\not\Rightarrow\mkern3mu}}
\newcommand\joinable[3][]{%
	\ifthenelse{ \equal{#1}{} }%
		{#2 \mathrel{\downarrow} #3}%
		{#2 \mathrel{\downarrow_{#1}} #3}%
}
\newcommand\gjoinable[3][]{%
	\ifthenelse{ \equal{#1}{} }%
		{#2 \mathrel{\Downarrow} #3}%
		{#2 \mathrel{\Downarrow_{#1}} #3}%
}	
\newcommand\njoinable[3][]{%
	\ifthenelse{ \equal{#1}{} }%
		{#2 \mathrel{\not\downarrow} #3}%
		{#2 \mathrel{\not\downarrow_{#1}} #3}%
}
\newcommand\Vars%
\newcommand\Funcs%
\newcommand\FuncsE%
\newcommand\Terms%
\newcommand\AllTerms%
\newcommand\AllGTerms%
\newcommand\GSubs%
\newcommand\GInsts%
\newcommand\GClos%
\newcommand\rtrans[1]%
\newcommand\trans[1]%
\newcommand\tup[1]%
\newcommand\such%
\newcommand\keyword[1]%
\newcommand\beq
\newcommand\bne
\newcommand\bdoteq
\newcommand\bto
\newcommand\bmapsto
\newcommand\clos[2]%
\newcommand\clos*[2]%
\newcommand\clord[1]%
\newcommand\litord[1]%
\newcommand\ssucc%
\newcommand\sprec%
\newcommand\mquote[1]%
\newcommand\mathwide[1]%
\DeclareMathOperator\Exists%
\DeclareMathOperator\Forall%
\DeclareRobustCommand\infers%
\DeclareMathOperator\mgu%
\newcommand\selectedlit[1]%
\newcommand\compl[1]%
\newcommand\supterm%
\newcommand\suptermeq%
\newcommand\subterm%
\newcommand\subtermeq%
\newcommand\relcomma%
\newcommand\incomparable%
\newcommand\moregen%
\newcommand\lessgen%
\newcommand\moregeneq%
\newcommand\lessgeneq%
\newcommand\id%
\newcommand\mdoubleplus%
\newcommand\concat%
\renewcommand\keyword[1]%
\spnewtheorem{theorem}{Theorem}{\bfseries}{}
\spnewtheorem{lemma}{Lemma}{\bfseries}{}
\spnewtheorem{thcase}{Case}{\bfseries}{}
\spnewtheorem{thsubcase}{Subcase}[thcase]{\bfseries}{}
\spnewtheorem{definition}{Definition}{\bfseries}{}
\newcommand\tabularbox[1]%
\DeclareMathOperator{\acsubterms}{subterms}
\DeclareMathOperator{\acrewrite}{sort}
\DeclareMathOperator{\acnormalise}{norm}
\DeclareMathOperator{\acsort}{csort}
\definecolor{linkblue}{HTML}{0066cc}
\begin{document}

\title{AC simplifications and closure redundancies \\ in the superposition calculus} 

\author{André Duarte~\orcidID{0000-0002-5228-213X} \and Konstantin Korovin~\orcidID{0000-0002-0740-621X}}
\authorrunning{André Duarte and Konstantin Korovin}

\institute{The University of Manchester, Manchester, United Kingdom \\ 
	\{\href{mailto:andre.duarte@manchester.ac.uk}{\texttt{andre.duarte}},\href{mailto:konstantin.korovin@manchester.ac.uk}{\texttt{konstantin.korovin}}\}\texttt{@manchester.ac.uk}
}

\maketitle

\begin{abstract}
	Reasoning in the presence of associativity and commutativity (AC) is well known to be challenging due to prolific nature of these axioms. Specialised treatment of AC axioms is mainly supported by provers for unit equality which are based on Knuth-Bendix completion. The main ingredient for dealing with AC in these provers are ground joinability criteria adapted for AC.
	%
	%
	In this paper we extend AC joinability 
	from the context of unit equalities and Knuth-Bendix completion to the superposition calculus and full first-order logic. Our approach is based on an extension of the Bachmair-Ganzinger model construction and a new redundancy criterion which covers ground joinability. 
	%
	A by-product of our approach is a new criterion for applicability of demodulation which we call encompassment demodulation. This criterion is useful in any superposition theorem prover, independently of AC theories, and we demonstrate that it enables demodulation in many more cases, compared to the standard criterion.	\KK{shorten or remove: encompassment might be seen as known}

	\keywords{superposition \and associativity-commutativity \and ground joinability \and first-order theorem proving \and demodulation \and iProver} 
\end{abstract}

\section{Introduction}\label{sec:intro}
Associativity and commutativity (AC) axioms occur in many applications but efficient reasoning with them remain one of the major challenges in  first-order theorem proving  due to prolific nature of these axioms. 
\todo{There is a proper system description of Twee in this year's CADE, changed the citation.}
Despite a number of theoretical advances 
specialised treatment of AC axioms is mainly supported by provers for unit equality such as Waldmeister~\cite{DBLP:journals/aicom/LochnerH02}, Twee~\cite{twee-systemdesc} and MaedMax~\cite{DBLP:conf/cade/WinklerM18}. These provers are based on Knuth-Bendix completion, and the main ingredient for dealing with AC in these provers are ground joinability criteria adapted for AC~\cite{DBLP:conf/cade/MartinN90,acjoinability}. Completeness proofs for ground joinability, known so far, are restricted to unit equalities, which limits applicability of these techniques.
These proofs are based on proof transformations 
for unit rewriting which are not easily adaptable to the full first-order logic and also lack general redundancy criteria.

In this paper we extend ground AC joinability criteria  from the context of Knuth-Bendix completion to the superposition calculus for full first-order logic. Our approach is based on an extension of the Bachmair-Ganzinger model construction~\cite{DBLP:journals/logcom/BachmairG94} and a new redundancy criterion called closure redundancy. Closure redundancy allows for fine grained redundancy elimination which we show also covers ground AC joinability. We also introduced a new simplification called AC normalisation and showed that AC normalisation preserves completeness of the superposition calculus. Superposition calculus with the standard notion of redundancy can generate infinitely many non-redundant conclusions from AC axioms alone. Using our generalised notion of redundancy we can show that all of these inferences are redundant in the presence of a single extension axiom. 

Using these results, superposition theorem provers for full first-order logic such as Vampire~\cite{vampire}, E~\cite{eprover}, SPASS~\cite{spass}, Zipperposition~\cite{DBLP:conf/cade/VukmirovicBBCNT21} and iProver~\cite{iprover-systemdesc} can incorporate AC simplifications without compromising completeness.
	
A by-product of our approach is a new criterion for applicability of demodulation which we call encompassment demodulation. Demodulation is one of the main simplification rules in the superposition-based reasoning and is a key ingredient in efficient first-order theorem provers.
Our new demodulation criterion is useful independently of AC theories, and we demonstrate that it enables demodulation in many more cases, compared to the standard demodulation. 

The main contributions of this paper include: 
\begin{enumerate}
	\item New redundancy criteria for the superposition calculus called closure redundancy.
	\item Completeness proof of the superposition calculus with the closure redundancy.
	\item Proof of admissibility of AC joinability and AC normalisation simplifications for the superposition calculus. 
	\item Encompassment demodulation and its admissibility for the superposition calculus.
\end{enumerate}

\todo{General advice is more motivating examples.}
In \Cref{sec:prelim} we discuss preliminary notions, introduce closure orderings and prove properties of these orderings
In \Cref{sec:model} we introduce closure redundancy and prove the key theorem stating completeness of the superposition calculus with closure redundancy. 
In \Cref{sec:redundancies}  we use  closure redundancy to show that  encompassment demodulation,
AC joinability and AC normalisation are admissible simplifications. 
In \Cref{sec:experiment} we show some experimental results and conclude in  \Cref{sec:discussion}.

\section{Preliminaries}\label{sec:prelim}

We consider a signature consisting of a finite set of function symbols and the equality predicate as the only predicate symbol. We fix a countably infinite set of variables. 
First-order \keyword{terms} are defined in the usual manner. 
Terms without variables are called \keyword{ground terms}. 
A \keyword{literal} is an unordered pair of terms with either positive or negative polarity, 
written $s \beq t$ and $s \bne t$ respectively
(we write $s \bdoteq t$ to mean either of the former two). 
A \keyword{clause} is a multiset of literals. 
Collectively terms, literals, and clauses will be called \keyword{expressions}. 

A \keyword{substitution} is a mapping from variables to terms which is the identity for all but a finitely many variables. 
If $e$ is an expression, we denote application of a substitution $\sigma$ by $e\sigma$, 
replacing  
all variables with their image in $\sigma$. 
%
%
Let $\GSubs(e) = \{ \sigma \mid e\sigma  \text{ is ground} \}$ be the set of \keyword{ground substitutions} for $e$. 
 \KK{ground to grounding} 
Overloading this notation for sets we write $\GSubs(E) = \{ \sigma \mid \forall e \in E.~ \text{$e\sigma$ is ground} \}$. 
Finally, 
we write e.g.\ $\GSubs(e_1,e_2)$ instead of $\GSubs(\{e_1,e_2\})$. 

An injective substitution $\theta$ with codomain being the set of variables is a \keyword{renaming}. 
Substitutions which are not renamings are called \keyword{proper}. 

A substitution $\theta$ is \keyword{more general} than $\sigma$ if $\theta\rho = \sigma$ for some 
proper substitution $\rho$. 
If $s$ and $t$ can be \keyword{unified}, that is, if there exists $\sigma$ such that $s\sigma = t\sigma$, 
then there also exists the \keyword{most general unifier}, written $\mgu(s,t)$. 
%
%
A term $s$ is said to be \keyword{more general} than $t$ 
if there exists a substitution $\theta$ that makes $s\theta = t$ but there is no substitution $\sigma$ such that $t\sigma = s$. 
We may also say that $t$ is a \keyword{proper instance} of $s$. 
Two terms $s$ and $t$ are said to be \keyword{equal modulo renaming} 
if there exists a renaming $\theta$ such that $s\theta = t$. 
The relations ``less general than'', ``equal modulo renaming'', and their union 
are represented respectively by the symbols $\mquote{\lessgen}$, $\mquote{\equiv}$, and $\mquote\lessgeneq$. \KK{are ``more general than'' signs reversed ?}

A more refined notion of instance is that of \keyword{closure} \cite{bachmair_basic_1995}. 
\todo{Added reference}
Closures are pairs $\clos t\sigma$ that are said to \keyword{represent} the term $t\sigma$ while retaining information about the original term and its instantiation. 
Closures where $t\sigma$ is ground are said to be \keyword{ground closures}. 
Let $\GClos(t) = \{ \clos t\sigma \mid \text{$t\sigma$ is ground} \}$ be the set of ground closures of $t$. 
Analogously to term closures, we define closures for other expressions such as literals and clauses, as a pair of an expression and a substitution. 
Overloading the notation for sets, if $N$ is a set of clauses then $\GClos(N) = \bigcup_{C\in N} \GClos(C)$. 

We write $s[t]$ if $t$ is a \keyword{subterm} of $s$. 
If also $s \ne t$, then it is a \keyword{strict subterm}. 
We denote these relations by $s \suptermeq t$ and $s \supterm t$ respectively. 
We write $s[t\mapsto t']_p$ to denote the term obtained from $s$ by replacing $t$ at the position $p$ by $t'$.
We omit the position when it clear from the context or irrelevant. 

\todo{We clarify that we assume replacement always means “simultaneous replacement of all occurrences”; usually it is at a position do we really replace all subterms ?}

A relation $\mquote{\to}$ over the set of terms is a \keyword{rewrite relation} if (i) $l \to r \imp l\sigma \to r\sigma$ and (ii) $l \to r \imp s[l] \to s[l\mapsto r]$. 
The members of a rewrite relation are called \keyword{rewrite rules}. 
The \keyword{reflexive-transitive closure} of a relation is the smallest reflexive-transitive relation which contains it. It is denoted by $\mquote{\rtrans\to}$. Two terms are \keyword{joinable} ($\joinable{s}{t}$) if $s \rtrans\to u \rtrans\gets t$. 

 If a rewrite relation is also a strict ordering (transitive, irreflexive), then it is a \keyword{rewrite ordering}. 
A \keyword{reduction ordering} is a rewrite ordering which is well-founded.
In this paper we consider reduction orderings which are total on ground terms, 
such orderings are also \keyword{simplification orderings} 
i.e., satisfy  $s \supterm t \imp s \succ t$.


For an ordering $\mquote{\succ}$ over a set $X$, its \keyword{multiset extension} $\mquote{\ssucc}$ over multisets of $X$ is given by: 
$A \ssucc B$ iff $\forall x \in B .~ B(x)>A(x) ~ \exists y \in A .~ y \succ x \land A(y)>B(y)$, 
where $A(x)$ is the number of occurrences of element $x$ in multiset $A$. 
\todo{Is it necessary to define what “multiset” means here?}
It is well known that the mutltiset extension of a well-founded (total) order is also a well-founded (respectively, total) order~\cite{DBLP:journals/cacm/DershowitzM79}. 



\subsection*{Orderings on closures}\label{sec:orderings}

In the following, let $\mquote{\succ_t}$ be a reduction ordering which is total on ground terms. 
Examples of such orderings include KBO or LPO \cite{termrewriting}. 
\todo{Number definitions?}
It is extended to an ordering on literals via $L \succ_l L'$ iff $M_l(L) \ssucc_t M_l(L')$, 
where $M_l(s\beq t) = \{s,t\}$ and $M_l(s \bne t) = \{s,s,t,t\}$. 
It is further extended to an ordering on clauses via $C \succ_c D$ iff $C \ssucc_l D$. 

We extend this ordering to an ordering on ground closures. 
The idea is to ``break ties'', whenever two closures represent the same term, to make more general closures smaller in the ordering than more specific ones. 
The definitions follow. 
\todo{It is also suggested that we properly define how to make $\mquote{\succ_{tc}}$ total, rather than simply saying that it is extended in an arbitrary, unspecified way.}
\begin{align}
&\clos s\sigma \succ_{tc} \clos t\rho & &\text{iff} &
	&\tabularbox{
	\text{either $s\sigma \succ_t t\rho$} \\
	\text{or else $s\sigma = t\rho$ and $s \lessgen t$. }
	} \label{eq:closord:1} \\
%
%
\intertext{%
This is a well-founded ordering, since $\mquote{\succ_t}$ and $\mquote{\lessgen}$ are also well-founded. 
However it is only a partial order even on ground closures 
(e.g.,\ $\clos{f(x,b)}{(x\bmapsto a)} \incomparable \clos{f(a,y)}{(y\bmapsto b)}$), 
but it is well-known that any partial well-founded order can be extended to a total well-founded order (see e.g.\ \cite{wellfoundedext}). 
Therefore we will assume that $\mquote{\succ_{tc}}$ is extended to a total well-founded order on ground closures. 
%
Then let 
$M_{lc}(\clos*{s\beq t}\theta) = \{ \clos s\theta , \clos t\theta \}$ and 
$M_{lc}(\clos*{s\bne t}\theta) = \{ \clos s\theta , \clos {s\theta}\id , \clos t\theta , \clos {t\theta}\id \}$ in}
&\clos L\sigma \succ_{lc} \clos {L'}\rho & &\text{iff} & 
	&M_{lc}(\clos L\sigma) \ssucc_{tc} M_{lc}(\clos {L'}\rho) \,, \\
%
\intertext{and let 
$M_{cc}(\clos C\sigma) = \{ \clos L\sigma\}$ if $C$ is a unit clause $\{L\}$, and 
$M_{cc}(\clos C\sigma) = \{ \clos {L\sigma}\id \mid L \in C \}$ otherwise,\ in}
&\clos C\sigma \succ_{cc} \clos D\rho & &\text{iff} & 
	&M_{cc}(\clos C\sigma) \ssucc_{lc} M_{cc}(\clos D\rho) \,.
\end{align}

\todo{The footnote is a clarification requested by one of the reviewers, if you think it's unnecessary we can take it out.}
\KK{Let's take it out, I added a sentence in the text}
Let us note that unit and non-unit clauses are treated differently in this ordering. 
Some properties that will be used throughout the paper follow. 

\begin{lemma}\label{lemma:ord:reduction-rel}
$\mquote{\succ_{tc}}$, $\mquote{\succ_{lc}}$, and $\mquote{\succ_{cc}}$ are all well-founded 
and total on ground term closures, literal closures, and clause closures, respectively. 
\begin{proof}
We have already established that $\succ_{tc}$ is well-founded by construction. 
$\mquote{\succ_{lc}}$ and $\mquote{\succ_{cc}}$ are derived from $\mquote{\succ_{tc}}$ 
by multiset extension, so they are also well-founded. 
Similarly, $\mquote{\succ_{tc}}$ is total on ground-terms on by construction, 
 and $\mquote{\succ_{lc}}$ and $\mquote{\succ_{cc}}$ are derived from $\mquote{\succ_{tc}}$ by multiset extension, so they are also total on ground literals/clauses. 
\qed\end{proof}
\end{lemma}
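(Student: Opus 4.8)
The plan is to reduce everything to two ingredients already in hand: that $\succ_{tc}$ is a well-founded total order on ground term closures, and the recorded fact that the multiset extension of a well-founded (respectively total) order is again well-founded (respectively total). Given these, the statement for $\succ_{lc}$ and $\succ_{cc}$ should follow by simply pushing the properties through the two multiset extensions, so the bulk of the work is already done in the construction of $\succ_{tc}$.

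First I would dispatch $\succ_{tc}$ directly from its definition. The relation in \eqref{eq:closord:1} compares two ground closures lexicographically: first by the represented ground terms under $\succ_t$, and, when these coincide, by the generality order $\lessgen$ on the underlying non-ground terms. Both $\succ_t$ on ground terms and $\lessgen$ are well-founded, so this combination is well-founded; it is in general only a partial order, but we have already assumed it extended to a total well-founded order on ground closures. Hence $\succ_{tc}$ is well-founded and total with no further argument.

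Next, for $\succ_{lc}$ and $\succ_{cc}$ I would transfer these properties across the multiset extensions. Since $\clos L\sigma \succ_{lc} \clos{L'}\rho$ holds exactly when $M_{lc}(\clos L\sigma) \ssucc_{tc} M_{lc}(\clos{L'}\rho)$, any infinite $\succ_{lc}$-descending chain would induce an infinite $\ssucc_{tc}$-descending chain of finite multisets of ground term closures, contradicting well-foundedness of $\ssucc_{tc}$; and $\ssucc_{tc}$ is well-founded because $\succ_{tc}$ is, by the multiset fact. The identical argument, one level up, yields well-foundedness of $\succ_{cc}$ from that of $\ssucc_{lc}$. Totality is inherited the same way: $\ssucc_{tc}$ totally orders finite multisets of ground term closures, so any two ground literal closures are comparable through their images under $M_{lc}$, and likewise $\succ_{cc}$ is total through $M_{cc}$ and $\ssucc_{lc}$.

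The only point where I would slow down is the totality transfer, and even there the difficulty is minor: one must check that passing through $M_{lc}$ and $M_{cc}$ does not weaken the claim. For $M_{lc}$ this is harmless, since a literal is an unordered pair and the defining cases already identify $s \beq t$ with $t \beq s$ as the same multiset, while the duplicated copies $\clos{s\theta}\id$ in the negative case affect only the comparison, not comparability. For $M_{cc}$ the subtlety is the deliberately different treatment of unit versus non-unit clauses: a unit clause maps to a singleton multiset of a literal closure, whereas a non-unit clause maps to a multiset of identity-substituted literal closures. Because both land in finite multisets of literal closures, $\ssucc_{lc}$ still compares them, so totality across the unit/non-unit boundary is unaffected, understood modulo the equivalence that $M_{cc}$ induces — which is precisely the identification the ordering is intended to make.
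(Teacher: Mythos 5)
Your proposal is correct and takes essentially the same route as the paper's proof: well-foundedness and totality of $\mquote{\succ_{tc}}$ come from its construction (including the assumed extension to a total well-founded order), and both properties are then pushed through the two multiset extensions to $\mquote{\succ_{lc}}$ and $\mquote{\succ_{cc}}$ using the recorded fact that the multiset extension of a well-founded (total) order is well-founded (total). Your closing caveat is in fact more careful than the paper, which silently glosses over the point that totality of $\mquote{\succ_{cc}}$ holds only modulo the identification induced by $M_{cc}$ — e.g.\ for non-unit $C$ the distinct closures $\clos C\sigma$ and $\clos{C\sigma}\id$ have equal $M_{cc}$-images and are therefore incomparable.
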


\begin{lemma}\label{lemma:ord:id-id} 
Assume $s$, $t$ are ground, then
\AD{$\succ_{tc}$ is only defined on ground}
$\clos s\id \succ_{tc} \clos t\id \eqv s \succ_{t} t$.
Analogously for $\mquote{\succ_{lc}}$ and $\mquote{\succ_{cc}}$. 
\end{lemma}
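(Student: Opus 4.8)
The plan is to reduce all three equivalences to the term case by unfolding the definitions, so that the only genuine content is a one-line observation about generality of ground terms. First I would settle the term statement. Instantiating \eqref{eq:closord:1} with $\sigma = \rho = \id$ and using that $s\id = s$ and $t\id = t$ are ground, $\clos s\id \succ_{tc} \clos t\id$ holds iff either $s \succ_t t$, or else $s = t$ and $s \lessgen t$. The second disjunct can never occur: if $s = t$ as ground terms then $s$ and $t$ are literally identical, hence equally general, so the strict relation $s \lessgen t$ fails. The disjunction therefore collapses to $s \succ_t t$, which is exactly the claimed equivalence.

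For the literal and clause cases my strategy is to exhibit the relevant multisets as images of the term, respectively literal, multisets under an order isomorphism, and then invoke that the multiset extension is monotone under such maps. Concretely, by the term case the map $u \mapsto \clos u\id$ from ground terms to their identity closures is injective and satisfies $u \succ_t v \eqv \clos u\id \succ_{tc} \clos v\id$. Since the multiset order of \Cref{sec:prelim} refers only to the underlying order among the elements present and to their multiplicities, such an order-preserving and order-reflecting injection transfers the multiset extension verbatim, provided the multisets compared on the closure side consist entirely of identity closures, which will be the case here.

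I would then check that, when $\theta = \id$ and the literal is ground, $M_{lc}(\clos L\id)$ is exactly this image of $M_l(L)$. For $s \beq t$ this gives $\{\clos s\id, \clos t\id\}$ against $\{s,t\}$; for $s \bne t$ the duplicating entries $\clos{s\theta}\id$, $\clos{t\theta}\id$ reduce to $\clos s\id$, $\clos t\id$, yielding $\{\clos s\id, \clos s\id, \clos t\id, \clos t\id\}$ against $\{s,s,t,t\}$, so the multiplicities agree. Hence $\clos L\id \succ_{lc} \clos{L'}\id$ iff $M_{lc}(\clos L\id) \ssucc_{tc} M_{lc}(\clos{L'}\id)$ iff $M_l(L) \ssucc_t M_l(L')$ iff $L \succ_l L'$. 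The clause case is then identical one level up, using the literal-level isomorphism $L \mapsto \clos L\id$ just obtained.

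The step requiring most care, and the main potential obstacle, is the definitional asymmetry between unit and non-unit clauses in $M_{cc}$. I expect it to dissolve under specialisation: for a ground clause $C$ with $\sigma = \id$, the unit branch gives $\{\clos L\id\}$ and the non-unit branch gives $\{\clos{L\id}\id \mid L \in C\} = \{\clos L\id \mid L \in C\}$, so in both branches $M_{cc}(\clos C\id)$ equals $\{\clos L\id \mid L \in C\}$, the image of $C$ under the literal isomorphism. Consequently $\clos C\id \succ_{cc} \clos D\id \eqv C \ssucc_l D \eqv C \succ_c D$ with no surviving case split on the arities of $C$ and $D$, completing the argument.
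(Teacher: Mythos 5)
Your proof is correct, and it agrees with the paper, which states \Cref{lemma:ord:id-id} without proof precisely because it follows by the definitional unfolding you carry out; your key check — that the unit/non-unit split in $M_{cc}$ collapses for $\sigma=\id$ on ground clauses, since $\{\clos{L\id}\id \mid L \in C\} = \{\clos L\id \mid L \in C\}$ — is exactly the point that makes the clause case (and the duplicated entries in the negative-literal case) go through. One small polish: $\mquote{\succ_{tc}}$ is not \eqref{eq:closord:1} itself but an arbitrary total well-founded extension of it, so the forward direction of your term case needs one extra remark — for ground $s \ne t$, totality of $\mquote{\succ_t}$ means \eqref{eq:closord:1} already compares $\clos s\id$ and $\clos t\id$, so by asymmetry the extension adds no new pairs between identity closures of ground terms, which legitimises reading the definition as an ``iff'' there.
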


\begin{lemma}\label{lemma:ord:ext}
$\mquote{\succ_{tc}}$ is an extension of $\mquote{\succ_t}$, 
in that $s\sigma \succ_t t\rho \imp \clos s\sigma \succ_{tc} \clos t\rho$, 
however this is generally not the case for $\mquote{\succ_{lc}}$ and $\mquote{\succ_{cc}}$: 
$s\sigma \bdoteq t\sigma \succ_{l} u\rho \bdoteq v\rho \nimp \clos*{s\bdoteq t}\sigma \succ_{lc} \clos*{u\bdoteq v}\rho$, 
and $C\sigma \succ_{c} D\rho \nimp \clos C\sigma \succ_{cc} \clos D\rho$. 
\end{lemma}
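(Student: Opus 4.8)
The statement bundles three independent claims: one positive inclusion for $\succ_{tc}$ and two non-implications, for $\succ_{lc}$ and for $\succ_{cc}$. The plan is to dispatch the positive part directly from the definition, and to refute the other two with a single small counterexample that is reused for both. For the positive claim I would argue straight from the defining condition~\eqref{eq:closord:1}: assuming $s\sigma \succ_t t\rho$, its first disjunct already yields $\clos{s}{\sigma} \succ_{tc} \clos{t}{\rho}$ in the \emph{partial} closure order, and since $\succ_{tc}$ was fixed to be a total well-founded \emph{extension} of that partial order, the relation is preserved. This step is immediate.

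The interesting part is explaining \emph{why} the analogous inclusions fail after passing to literals and clauses. The conceptual point I want to make precise is that $\succ_{lc}$ and $\succ_{cc}$ are multiset extensions of $\succ_{tc}$, and $\succ_{tc}$ does not merely extend $\succ_t$ but \emph{refines} it through the generality tie-break; this refinement can flip a comparison that $\succ_l$ settles from ground terms alone. Concretely I would take constants with $c \succ_t b \succ_t a$ and set $L := (x \beq b)$ with $\sigma := (x \bmapsto c)$, against $L' := (c \beq a)$ with $\rho := \id$. On ground instances $M_l(L\sigma) = \{c,b\} \ssucc_t \{c,a\} = M_l(L'\rho)$, so $L\sigma \succ_l L'\rho$. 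On closures, however, $M_{lc}(\clos{L}{\sigma}) = \{\clos{x}{\sigma}, \clos{b}{\sigma}\}$ while $M_{lc}(\clos{L'}{\rho}) = \{\clos{c}{\id}, \clos{a}{\id}\}$, and here $\clos{x}{\sigma}$ and $\clos{c}{\id}$ represent the same ground term $c$; since $x$ is more general than $c$, the tie-break gives $\clos{c}{\id} \succ_{tc} \clos{x}{\sigma}$. I would then observe that $\clos{c}{\id}$ is thus the strict maximum of all four closures, it lies in the right-hand multiset, and nothing in the left-hand multiset exceeds it, so the left multiset cannot dominate the right; hence $\clos{L}{\sigma} \not\succ_{lc} \clos{L'}{\rho}$, realising the first $\nimp$. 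A useful point to record is that every comparison used here (including $\clos{x}{\sigma} \succ_{tc} \clos{b}{\sigma} \succ_{tc} \clos{a}{\id}$, which come from $c \succ_t b \succ_t a$) is already forced by the partial order~\eqref{eq:closord:1}, so the counterexample is robust to whichever total completion of $\succ_{tc}$ was chosen.

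For $\succ_{cc}$ I would reuse exactly this data as the unit clauses $C := \{L\}$ and $D := \{L'\}$. Because $M_{cc}$ keeps the single literal closure unchanged for unit clauses, the clause comparison collapses to the singleton comparison under $\ssucc_{lc}$, which is just the literal comparison above; thus $C\sigma \succ_c D\rho$ holds while $\clos{C}{\sigma} \not\succ_{cc} \clos{D}{\rho}$, giving the second $\nimp$. I expect the only real obstacle to be bookkeeping: one must apply the generality tie-break in the correct direction (more general closures are \emph{smaller}) so that the maximal closure genuinely ends up on the side that $\succ_l$ ranked as smaller; once the total order on the four closures is laid out explicitly, the multiset arguments are routine.
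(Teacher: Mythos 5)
Your proposal is correct and takes essentially the same route as the paper's own proof: the positive claim is immediate from the first disjunct of the definition of $\mquote{\succ_{tc}}$, and both non-implications are refuted by a single counterexample in which the generality tie-break makes the identity closure of the shared represented term the strict maximum, with the clause case obtained by reading the same literals as unit clauses. The paper's concrete instance is $\clos*{f(x) \beq a}{(x\bmapsto a)}$ versus $\clos*{f(a) \beq b}{\id}$ with $a \succ_t b$, differing from yours only in the choice of terms; your explicit verification of multiset non-dominance and the remark that every comparison is forced by the partial order (hence independent of the chosen total extension) are sound additional details.
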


\begin{proof}
As an example, let $a \succ_t b$ and consider literal closures 
\begin{align}
&\clos*{f(x) \beq a}{x/a} & &\clos*{f(a) \beq b}{id}
\end{align}
The literal represented by the one on the left is greater than the one represented by the one on the right, in $\mquote{\succ_{l}}$. 
However, the closure on the left is smaller than the one on the right, in $\mquote{\succ_{lc}}$. 
This is also an example for $\mquote{\succ_{cc}}$ if these are two unit clauses. 
\qed\end{proof}

\todo{The lemma I commented here I'm positive that it is not used, or at least it is not explicitly cited.}

\begin{lemma}\label{lemma:ord:2a}
$\clos{t\rho }\sigma \succeq_{tc} \clos t{\rho\sigma}$. Analogously for $\mquote{\succ_{lc}}$ and $\mquote{\succ_{cc}}$.
In particular, $\clos{t\sigma}\id \succeq_{tc} \clos t{\sigma}$ 
and analogously for $\mquote{\succ_{lc}}$ and $\mquote{\succ_{cc}}$. 
\begin{proof}
From definition and the fact that $t\rho \lessgeneq t$. \qed\end{proof}
\end{lemma}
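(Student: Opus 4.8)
The plan is to reduce everything to a single observation at the term level and then lift it to literals and clauses through the multiset extensions. The crucial fact is that $\clos{t\rho}\sigma$ and $\clos t{\rho\sigma}$ represent the \emph{same} ground term: since substitution composition is associative, $(t\rho)\sigma = t(\rho\sigma) = t\rho\sigma$. Hence when comparing them with $\succ_{tc}$ we always land in the second clause of its definition \eqref{eq:closord:1}, where the tie on represented terms is broken by the generality of the term parts. As $t\rho$ is an instance of $t$ we have $t\rho \lessgeneq t$, so the definition yields $\clos{t\rho}\sigma \succeq_{tc} \clos t{\rho\sigma}$ — strictly $\succ_{tc}$ when $\rho$ acts properly on the variables of $t$, and with the two closures equal modulo renaming otherwise. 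The ``in particular'' statement is then the special case with outer substitution $\id$ and inner substitution $\sigma$.

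For the literal case I would unfold $M_{lc}$ on both sides. For a positive literal $s \beq t$ the two multisets are $\{\clos{s\rho}\sigma, \clos{t\rho}\sigma\}$ and $\{\clos s{\rho\sigma}, \clos t{\rho\sigma}\}$; the term case applied componentwise gives $\clos{s\rho}\sigma \succeq_{tc} \clos s{\rho\sigma}$ and $\clos{t\rho}\sigma \succeq_{tc} \clos t{\rho\sigma}$, and componentwise domination between equinumerous multisets lifts through $\ssucc_{tc}$ to $\succeq_{lc}$. For a negative literal the extra identity closures $\clos{(s\rho)\sigma}\id$ and $\clos{s(\rho\sigma)}\id$ (and likewise for $t$) are literally equal, since they carry the same ground term and the same substitution $\id$, so they cancel and the remaining pair is handled exactly as in the positive case. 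The clause case then splits on unitness: if $C$ is a unit clause the result is immediate from the literal case, while if $C$ is non-unit then $M_{cc}$ consists solely of identity closures $\clos{(L\rho)\sigma}\id = \clos{L(\rho\sigma)}\id$, which coincide literal by literal, so the two multisets are equal and $\succeq_{cc}$ holds as an equality.

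The only genuinely delicate point — the step I expect to need care rather than the routine multiset bookkeeping — is the boundary case where $\rho$ merely renames the variables of $t$. There $t\rho \equiv t$, so neither term part is strictly less general and the partial order \eqref{eq:closord:1} leaves $\clos{t\rho}\sigma$ and $\clos t{\rho\sigma}$ incomparable; the claim then relies on reading these as the \emph{same} closure up to renaming, so that $\succeq_{tc}$ holds as an equality and is not disturbed by the arbitrary total extension. I would also double-check that applying $\rho$ preserves non-unitness of a clause, since substitution preserves the number of literals counted with multiplicity, so that the unit/non-unit split above is well defined on both sides.
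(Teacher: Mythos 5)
Your proposal is correct and takes essentially the same route as the paper, whose entire proof is the one line ``from the definition and the fact that $t\rho \lessgeneq t$'': your unfolding through $M_{lc}$ and $M_{cc}$, with the identity closures $\clos{(s\rho)\sigma}\id = \clos{s(\rho\sigma)}\id$ coinciding outright, is exactly the intended reading of that line. Your flagged boundary case is also handled as you suspect: when $t\rho \equiv t$ the paper implicitly identifies closures whose term parts are equal modulo renaming (and whose substitutions agree on the relevant variables), so $\succeq_{tc}$ holds as equality and is not at the mercy of the arbitrary total extension.
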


\begin{lemma}\label{lemma:ord:id-nid}
$\clos t\sigma \succ_{tc} \clos s\id \eqv t\sigma \succ_{t} s$.%
\footnote{But not, in general, $\clos s\id \succ_{tc} \clos t\sigma \eqv s \succ_{t} t\sigma$, e.g.\ $\clos{f(a)}\id \succ_{tc} \clos{f(x)}{(x\bmapsto a)}$.}  
Analogously for $\mquote{\succ_{lc}}$ and $\mquote{\succ_{cc}}$. 

\begin{proof}
For $\clos t\sigma \succ_{tc} \clos s\id$ to hold, either $t\sigma \succ_{t} s$, or else $t\sigma = s$ but then 
$t \lessgen s$ cannot hold. 
The $\Leftarrow$ direction follows from the definition. 
\qed\end{proof}
\end{lemma}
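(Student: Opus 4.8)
The plan is to prove the term-closure equivalence directly from definition~\eqref{eq:closord:1} and then to transport it to literals and clauses through the multiset extensions, in the same way \Cref{lemma:ord:id-id} is handled.

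For the forward direction of the term case I would assume $\clos t\sigma \succ_{tc} \clos s\id$ and read off the two disjuncts of~\eqref{eq:closord:1}: either $t\sigma \succ_t s$, which is the desired conclusion, or else $t\sigma = s$ together with $t \lessgen s$. The point is that the second disjunct is vacuous. Since $\clos s\id$ is a ground closure, $s = s\id$ is ground, and a ground term is never strictly more general than any term; hence $t \lessgen s$ (that is, $s$ strictly more general than $t$) cannot hold, leaving only $t\sigma \succ_t s$. The $\Leftarrow$ direction is immediate from the first disjunct of~\eqref{eq:closord:1} (equivalently, from \Cref{lemma:ord:ext}). I would stress that the argument is genuinely one-sided: in $\clos s\id \succ_{tc} \clos t\sigma$ the tie-breaking disjunct instead requires $s \lessgen t$, which is no longer excluded once $t$ is non-ground — this is exactly what the footnote's example $\clos{f(a)}\id \succ_{tc} \clos{f(x)}{(x\bmapsto a)}$ witnesses.

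Next I would lift the statement to $\succ_{lc}$ by comparing the defining multisets. Writing $L'$ for the ground literal, every element of $M_{lc}(\clos{L'}\id)$ is an identity closure $\clos u\id$ of a ground term $u \in M_l(L')$, and every element of $M_{lc}(\clos L\sigma)$ is a closure representing a term of $M_l(L\sigma)$; in both cases the multiset of represented terms is exactly $M_l(L')$, respectively $M_l(L\sigma)$. The term-closure case just established reduces the comparison of any left element $\clos a\tau$ against any right element $\clos u\id$ to $a\tau \succ_t u$, and \Cref{lemma:ord:ext} shows that $\succ_{tc}$ always respects $\succ_t$ on represented terms. Feeding these two facts into the multiset extension should give $M_{lc}(\clos L\sigma) \ssucc_{tc} M_{lc}(\clos{L'}\id) \eqv M_l(L\sigma) \ssucc_t M_l(L')$, i.e.\ $\clos L\sigma \succ_{lc} \clos{L'}\id \eqv L\sigma \succ_l L'$, and the clause case follows identically with $M_{cc}$ and $\succ_c$ in place of $M_{lc}$ and $\succ_l$.

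The hard part will be this last reduction. The objects being compared are multisets of \emph{closures}, whose equality is strictly finer than equality of the terms they represent, so the cancellation of common elements underlying the multiset order does not, a priori, match cancellation at the level of represented terms. To make it rigorous I would exploit that $\succ_{tc}$ is a lexicographic refinement of $\succ_t$ (primary key the represented term, ties broken by generality) and that each right-hand element $\clos u\id$, having a ground template, is maximal within its tie-class. This should justify replacing every left element $\clos a\tau$ by its identity closure $\clos{a\tau}\id$, which is legitimate by \Cref{lemma:ord:2a}, without changing the comparison against the all-identity multiset $M_{lc}(\clos{L'}\id)$; once both multisets consist solely of identity closures of ground terms, \Cref{lemma:ord:id-id} applies verbatim and yields the equivalence with $\ssucc_t$.
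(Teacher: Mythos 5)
Your term case is correct and coincides with the paper's own proof: read off the two disjuncts of \eqref{eq:closord:1}, observe the tie-breaking disjunct is vacuous, and get the $\Leftarrow$ direction from the first disjunct. The paper leaves the vacuity reason implicit; your observation that a ground $s$ cannot be strictly more general than any term is a valid way to supply it (equivalently: $t\sigma = s$ itself exhibits a substitution taking $t$ to $s$, directly contradicting $t \lessgen s$).

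The lifting to $\succ_{lc}$ and $\succ_{cc}$, however, has a genuine gap, and the biconditional you set out to prove there is in fact false. The flawed step is ``replacing every left element $\clos a\tau$ by $\clos{a\tau}\id$ \ldots without changing the comparison'': \Cref{lemma:ord:2a} only gives $\clos{a\tau}\id \succeq_{tc} \clos a\tau$, so the replacement \emph{enlarges} the left multiset, preserving $\ssucc_{tc}$ in the direction needed for $\imp$ but not for $\Leftarrow$. Indeed the tie-class maximality of identity closures, which you invoke in support, is exactly what breaks $\Leftarrow$: a left closure that merely ties with a right element $\clos u\id$ on represented terms is \emph{strictly below} it in $\succ_{tc}$, hence cannot cancel against it, and the multiset order then demands a left witness representing a term strictly $\succ_t u$ --- which $M_l(L\sigma) \ssucc_t M_l(L')$ need not supply when the term-level comparison is decided only after cancelling the maximal terms. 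Concretely, let $c \succ_t b \succ_t a$, $L = b \beq x$, $\sigma = (x \bmapsto c)$, $L' = a \beq c$. Then $L\sigma = b\beq c \succ_l a \beq c = L'$, yet $\clos c\id \succ_{tc} \clos x\sigma$ (your footnote's phenomenon, since $c \lessgen x$) and $\clos c\id \succ_{tc} \clos b\sigma$, so every element of $M_{lc}(\clos L\sigma)$ lies below $\clos c\id \in M_{lc}(\clos{L'}\id)$; hence not only does $\clos L\sigma \succ_{lc} \clos{L'}\id$ fail, the order reverses: $\clos{L'}\id \succ_{lc} \clos L\sigma$. The same example with unit clauses refutes the clause-level $\Leftarrow$. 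What your replacement argument does correctly establish is the forward direction $\clos L\sigma \succ_{lc} \clos{L'}\id \imp L\sigma \succ_l L'$ (via \Cref{lemma:ord:2a} and then \Cref{lemma:ord:id-id}), and that is the only multiset-level form the paper subsequently relies on (e.g.\ deducing $D\sigma \succ_c C\theta$ from $\cds \succ_{cc} \clos{C\theta}\id$ in \Cref{lem:model:upneg}); the backward-looking steps in \Cref{lem:model:conv,lem:model:up} do not pass through a lifted equivalence but follow pointwise from the term case under the stronger hypothesis that one term of the left literal strictly $\succ_t$-dominates \emph{all} terms of the right ground literal. Since the paper's own proof stops at the term case and asserts the rest ``analogously'', the honest conclusion is that the analogue must be stated asymmetrically; your attempt to prove it as a full equivalence cannot be repaired.
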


\begin{lemma}\label{lemma:ord:rewrite-rel}
$\mquote{\succ_{tc}}$ has the following property: 
$l \succ_{t} r \imp \clos{s[l]}\theta \succ_{tc} \clos{s[l\bmapsto r]}\theta$. 
Analogously for $\mquote{\succ_{lc}}$ and $\mquote{\succ_{cc}}$. 
\begin{proof}
For $\mquote{\succ_{tc}}$: 
let $l \succ_t r$. 
By the fact that $\mquote{\succ_t}$ is a rewrite relation, we have 
$l \succ_t r \imp
{s[l]} \succ_{t} {s[l\bmapsto r]} \imp 
{s[l]}\theta \succ_{t} {s[l\bmapsto r]}\theta$.  
Then, by the definition of $\mquote{\succ_{tc}}$, 
$\clos{s[l]}\theta \succ_{tc} \clos{s[l\bmapsto r]}\theta$. 
For $\mquote{\succ_{lc}}$ and $\mquote{\succ_{cc}}$: by the above and by their definitions we have that the analogous properties also hold. 
\qed\end{proof}
\end{lemma}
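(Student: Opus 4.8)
The plan is to reduce each of the three claims to the rewrite-ordering properties of the base term ordering $\mquote{\succ_t}$ and then propagate upward through the multiset extensions that define $\mquote{\succ_{lc}}$ and $\mquote{\succ_{cc}}$, using \Cref{lemma:ord:ext} and \Cref{lemma:ord:id-id} as the two bridges. Since $\mquote{\succ_{tc}}$ is only defined on ground closures, I assume throughout that $\theta$ grounds every closure in sight.

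For $\mquote{\succ_{tc}}$ I would first invoke that $\mquote{\succ_t}$ is a rewrite ordering: closure under contexts gives $l \succ_t r \imp s[l] \succ_t s[l\bmapsto r]$, and closure under substitution gives $s[l]\theta \succ_t s[l\bmapsto r]\theta$. As both sides are ground and $\mquote{\succ_t}$-related, \Cref{lemma:ord:ext} (with $\sigma = \rho = \theta$) yields $\clos{s[l]}\theta \succ_{tc} \clos{s[l\bmapsto r]}\theta$. This comparison is precisely the first clause of the definition of $\mquote{\succ_{tc}}$, and it survives the passage to the total well-founded extension, since extending a partial order only adds pairs and never reverses existing ones.

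For $\mquote{\succ_{lc}}$ and $\mquote{\succ_{cc}}$ the strategy is uniform: rewriting $l$ to $r$ inside a literal or clause replaces the affected entries of $M_{lc}$ (resp.\ $M_{cc}$) by strictly smaller ones and leaves the remaining entries fixed, so the standard multiset-extension property — replacing elements by smaller ones produces a smaller multiset — closes each case. For a positive literal $u \beq v$ with $l$ occurring in $u$, the entry $\clos u\theta$ shrinks to $\clos{u[l\bmapsto r]}\theta$ by the term case just established; the $\mquote{\succ_{cc}}$ unit case then reduces to this $\mquote{\succ_{lc}}$ comparison on the single literal.

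The one point needing genuine care, rather than bookkeeping, is the identity-tagged copies $\clos{\,\cdot\theta}\id$ appearing in $M_{lc}$ for negative literals and, one level up, in $M_{cc}$ for non-unit clauses; these cannot be compared through the $\theta$-tagged term case, so I handle them separately with \Cref{lemma:ord:id-id}. For a negative literal it reduces $\clos{u[l]\theta}\id \succ_{tc} \clos{u[l\bmapsto r]\theta}\id$ to the ground fact $u[l]\theta \succ_t u[l\bmapsto r]\theta$ already in hand, and for a non-unit clause its $\mquote{\succ_{lc}}$-analogue reduces $\clos{L\theta}\id \succ_{lc} \clos{L[l\bmapsto r]\theta}\id$ to $L\theta \succ_l L[l\bmapsto r]\theta$, which is itself immediate from the ground-term multiset comparison. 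Once both the $\theta$-tagged and the $\id$-tagged entries are shown to shrink, every relevant multiset is strictly smaller, and the three statements follow in the order $\mquote{\succ_{tc}}$, then $\mquote{\succ_{lc}}$, then $\mquote{\succ_{cc}}$.
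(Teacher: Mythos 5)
Your proposal is correct and takes essentially the same approach as the paper: derive $s[l]\theta \succ_t s[l\bmapsto r]\theta$ from closure of $\mquote{\succ_t}$ under contexts and substitutions, conclude $\clos{s[l]}\theta \succ_{tc} \clos{s[l\bmapsto r]}\theta$ by the first clause of the definition of $\mquote{\succ_{tc}}$ (your appeal to \Cref{lemma:ord:ext} is exactly that clause), and then lift the comparison through the multiset extensions defining $\mquote{\succ_{lc}}$ and $\mquote{\succ_{cc}}$. The paper compresses that last step into ``by their definitions''; your explicit treatment of the $\id$-tagged entries of $M_{lc}$ and $M_{cc}$ via \Cref{lemma:ord:id-id} simply spells out the bookkeeping the paper leaves implicit.
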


Sometimes we will drop subscripts and use just $\mquote{\succ}$ when it is obvious from the context: term, literals and clauses will be compared with $\mquote{\succ_t}$, $\mquote{\succ_l}$, $\mquote{\succ_c}$ respectively, and corresponding closures with $\mquote{\succ_{tc}}$, $\mquote{\succ_{lc}}$, $\mquote{\succ_{cc}}$.

\newcommand\cct{{\clos C\theta}}
\newcommand\cdt{{\clos D\theta}}
\newcommand\ccs{{\clos C\sigma}}
\newcommand\cds{{\clos D\sigma}}

\section{Model construction}\label{sec:model}

The superposition calculus comprises the following inference rules. 
\begin{align}
&\text{Superposition} & 
&\vcenter{\prftree%
	{\selectedlit{l \beq r\vphantom{[}} \vee C}%
	{\selectedlit{s[u] \bdoteq t} \vee D}%
	{(s[u\mapsto r] \bdoteq t \vee C \vee D)\theta}%
} 
\qcomma{\tabularbox{
	where $\theta = \operatorname{mgu}(l,u)$, \\
	$l\theta \npreceq r\theta$, $s\theta \npreceq t\theta$, \\
	and $s$ not a variable,
}} \\
&\text{Eq. Resolution} & 
&\vcenter{\prftree%
	{\selectedlit{s \bne t} \vee C}%
	{C\theta}%
} 
\qcomma{\text{where $\theta = \operatorname{mgu}(s,t)$,}} \\
&\text{Eq. Factoring} & 
&\vcenter{\prftree%
	{\selectedlit{s \beq t} \vee \selectedlit{s' \beq t'} \vee C}%
	{(s \beq t \vee t \bne t' \vee C)\theta}%
} 
\qcomma{\tabularbox{
	where $\theta = \operatorname{mgu}(s,s')$, \\ 
	$s\theta \npreceq t\theta$ and $t\theta \npreceq t'\theta$,
}}
\end{align}
and the selection function (underlined) selects at least one negative, or else all maximal (wrt.\ $\mquote{\succ_{t}}$) literals in the clause. 
\todo{Selection function}

The superposition calculus is refutationally complete wrt. the standard notion of redundancy~\cite{DBLP:journals/logcom/BachmairG94,handbook-paramodulation}. In the following, we refine the standard redundancy to closure redundancy and prove completeness in this case. 

\subsubsection{Closure redundancy}

Let $\GInsts(C) = \{ C\theta \mid \text{$C\theta$ is ground} \}$. 
In the standard definition of redundancy, a clause $C$ is redundant in a set $S$ 
if all $C\theta \in \GInsts(C)$ follow from smaller ground instances in $\GInsts(S)$. 
Unfortunately, this standard notion of redundancy does not cover many simplifications 
such as AC normalisation and a large class of demodulations 
(which we discuss in \Cref{sec:redundancies}). 

By modifying the notion of ordering between ground instances, 
using $\mquote{\succ_{cc}}$ rather than $\mquote{\succ_{c}}$, 
we adapt this redundancy notion to a closure-based one, 
which allows for such simplifications. 
We then show that superposition is still complete wrt.\ these redundancy criterion. 

A clause $C$ is \keyword{closure redundant} in a set $S$ 
if all $\clos C\theta \in \GClos(C)$ follow from smaller 
ground closures in $\GClos(S)$ 
(i.e.,\ for all $\clos C\theta \in \GClos(C)$ 
there exists a set $G \subseteq \GClos(S)$ 
such that $G \models \clos C\theta$ 
and $\Forall \clos D\rho \in G \such \clos D\rho \prec_{cc} \clos C\sigma$). 

Although the definition of closure redundancy looks similar to the standard definition, 
consider the following example showing differences between them.  
\begin{example}\label{ex:closure:redund}
Consider unit clauses $S=\{f(x)\beq g(x), g(b)\beq b\}$  where $f(x) \succ g(x) \succ b$.
Then $f(b)\beq b$ is not redundant in $S$, in the standard sense, as it does not follow from any smaller (wrt.\ $\mquote{\succ_{c}}$) ground instances of clauses in $S$,  
(it does follow from instances $f(b)\beq g(b)$, $g(b)\beq b$, but the former is bigger than $f(b) \beq b$). 
However, it is closure redundant in $S$, since its only ground instance $\clos*{f(b)\beq b}\id$ follows from 
the smaller (wrt.\ $\mquote{\succ_{cc}}$) closure instances: $\clos*{f(x)\beq g(x)}{(x\bmapsto b)}$ and $\clos*{g(b)\beq b}\id$. 
In other words, the new redundancy criterion allows demodulation even when the smaller side of the equation we demodulate with is greater than the smaller side of the target equation, provided that the matching substitution is proper. As we will see in \Cref{sec:redundancies} this considerably simplifies the applicability condition on demodulation and more crucially when dealing with theories such as AC it allows to use AC axioms to normalise clauses when standard demodulation is not be applicable. 

\todo{Reworded example a bit.}
\end{example}

%
%
Likewise, we extend the standard notion of redundant {inference}. 
An inference $C_1 , \dotsc , C_n \infers D$ is \keyword{closure redundant} in a set $S$ 
if, for all $\theta \in \GSubs(C_1,\dotsc,C_n,D)$, 
the closure $\clos{D}{\theta}$ follows from closures in $\GClos(S)$ which are smaller wrt.\ $\mquote{\succ_{cc}}$ 
than the maximal element of $\{ \clos{C_1}\theta , \dotsc , \clos{C_n}\theta \}$. 
\todo{Added notion of redundant inference}



Let us establish the following connection between closure redundant inferences and closure redundant clauses. 
An inference $C_1,\dotsc,C_n \infers D$ is \keyword{reductive} if for all $\theta \in \GSubs(C_1,\dotsc,C_n,D)$ 
we have $\clos D\theta \prec_{cc} \max\{ \clos{C_1}\theta , \dotsc , \clos{C_n}\theta \}$. 

\begin{lemma}\label{lem:redund}
If the conclusion of a reductive inference is in $S$ or is closure redundant in $S$, then the inference is closure redundant in $S$. 

\begin{proof}
If $D$ is in $S$, 
then all $\clos D\theta$ are in $\GClos(S)$. 
But if the inference is reductive then $\clos D\theta \prec_{cc} \max\{ \clos{C_1}\theta , \dotsc , \clos{C_n}\theta \}$, 
so it trivially follows from a closure smaller than that maximal element: itself. 

If $D$ is redundant, 
then all $\clos D\theta$ follow from smaller closures in $\GClos(S)$. 
But if the inference is reductive then again $\clos D\theta \prec_{cc} \max\{ \clos{C_1}\theta , \dotsc , \clos{C_n}\theta \}$, 
so it also follows from closures smaller than that maximal element. 
\qed\end{proof}
\end{lemma}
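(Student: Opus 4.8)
The plan is to argue directly from the definitions, splitting along the disjunctive hypothesis into the two cases $D \in S$ and $D$ closure redundant in $S$. In both cases the goal is the same: to verify the defining condition of a closure redundant \emph{inference}, namely that for every $\theta \in \GSubs(C_1,\dotsc,C_n,D)$ the closure $\clos D\theta$ follows from closures in $\GClos(S)$ that are all strictly smaller (wrt.\ $\mquote{\succ_{cc}}$) than $\max\{\clos{C_1}\theta,\dotsc,\clos{C_n}\theta\}$. The common engine in both cases is reductivity, which by definition supplies exactly the strict inequality $\clos D\theta \prec_{cc} \max\{\clos{C_1}\theta,\dotsc,\clos{C_n}\theta\}$ for every such $\theta$.

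First I would treat the case $D \in S$. Here I fix an arbitrary grounding $\theta \in \GSubs(C_1,\dotsc,C_n,D)$ and observe that, since $D \in S$, the closure $\clos D\theta$ lies in $\GClos(D) \subseteq \GClos(S)$. Taking the singleton witness $G = \{\clos D\theta\}$, entailment is trivial ($G \models \clos D\theta$ because a closure entails itself), and the single element of $G$ is strictly below the required maximum by reductivity. This already witnesses closure redundancy of the inference.

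Next I would treat the case where $D$ is closure redundant in $S$. I again fix $\theta \in \GSubs(C_1,\dotsc,C_n,D)$ and note that this $\theta$ in particular lies in $\GSubs(D)$, so the closure redundancy hypothesis on $D$ applies to $\clos D\theta$ and furnishes a set $G \subseteq \GClos(S)$ with $G \models \clos D\theta$ and $\clos E\rho \prec_{cc} \clos D\theta$ for every $\clos E\rho \in G$. Combining this with the reductivity inequality and transitivity of $\mquote{\succ_{cc}}$ yields $\clos E\rho \prec_{cc} \max\{\clos{C_1}\theta,\dotsc,\clos{C_n}\theta\}$ for all members of $G$, which is precisely the condition required for closure redundancy of the inference.

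The argument is essentially bookkeeping, so I do not expect a deep obstacle; the only points needing care are matching the quantifier ranges — checking $\GSubs(C_1,\dotsc,C_n,D) \subseteq \GSubs(D)$ so that the clause-level hypothesis on $D$ can legitimately be invoked for each grounding $\theta$ of the whole inference — and justifying the trivial self-entailment used in the first case. Transitivity of $\mquote{\succ_{cc}}$, available since it is a strict order (cf.\ \Cref{lemma:ord:reduction-rel}), is what lets the two strict inequalities chain together in the second case.
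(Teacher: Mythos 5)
Your proposal is correct and follows essentially the same two-case argument as the paper: the singleton witness $\{\clos D\theta\}$ with self-entailment when $D \in S$, and chaining the redundancy witness for $D$ with the reductivity inequality when $D$ is closure redundant. Your version merely spells out the bookkeeping (the inclusion $\GSubs(C_1,\dotsc,C_n,D) \subseteq \GSubs(D)$ and the appeal to transitivity of $\mquote{\succ_{cc}}$) that the paper leaves implicit.
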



A set of clauses $S$ is \keyword{saturated up to closure redundancy} if 
any inference $C_1,\dotsc,C_n \infers D$ with premises in $S$, which are all not redundant in $S$, is closure redundant in $S$. 
In the sequel, we refer to the new notion of closure redundancy as simply ``redundancy'', when it is clear form the context.

\newcommand\rmodel[1]{R_{#1}}
\newcommand\rmodelaug[1]{R^{#1}}

\begin{theorem}\label{th:main}
The superposition inference system is refutationally complete wrt.\ closure redundancy, 
that is, if a set of clauses is saturated up to closure redundancy 
and does not contain the empty clause $\bot$, then it is satisfiable. 

%
\KK{In conclusion mention dynamic completeness for future work and cite Jasmin paper}

\begin{proof}
Let $N$ be a set of clauses such that $\bot \not\in N$, 
and $G = \GClos(N)$. 
Let us assume $N$ is saturated up to closure redundancy. 
We will build a model for $G$, 
and hence for $N$, 
as follows. 
A model is represented by a convergent term rewrite system 
(we will show convergence in \Cref{lem:model:conv}), 
where a closure $\cct$ is true in a given model $R$ if 
at least one of its positive literals $\clos*{s\beq t}\theta$ has $\joinable[R] {s\theta}{t\theta}$, 
or if at least one of its negative literals $\clos*{s\bne t}\theta$ has $\njoinable[R] {s\theta}{t\theta}$. 

For each closure $\cct \in G$,  
the partial model $\rmodel\cct$ is a rewrite system defined as $\bigcup_{\cds \prec_{cc} \cct} \epsilon_\cds$. 
The total model $\rmodel\infty$ is thus $\bigcup_{\cds \in G} \epsilon_\cds$. 
For each $\cct \in G$,
the set $\epsilon_\cct$ is defined recursively over $\prec_{cc}$ as follows. If: 
\begin{equation}\label{eq:criteria}
\begin{minipage}{0.9\textwidth}
\begin{enumerate}[a.,noitemsep]
	\item $\cct$ is false in $\rmodel\cct$,
	\item ${l\theta \beq r\theta}$ strictly maximal in $C\theta$,
	\item $l\theta \succ_t r\theta$, 
	\item $\cct \setminus \{ {\clos*{l \beq r}\theta} \}$ is false in $\rmodel\cct \cup \{l\theta \bto r\theta\}$, 
	\item $l\theta$ is irreducible via $\rmodel\cct$, 
\end{enumerate}
\end{minipage}
\end{equation}
then $\epsilon_\cct = \{ l\theta \bto r\theta \}$ and the closure is called \keyword{productive}, otherwise $\epsilon_\cct = \emptyset$. 
Let also $\rmodelaug\cct$ be $\rmodel\cct \cup \epsilon_\cct$. 
\newcommand\criteriaref[1]{(\hyperref[eq:criteria]{\ref{eq:criteria}#1})}

Our goal 
is to show that $\rmodel\infty$ is a model for $G$. 
We will prove this by contradiction: if this is not the case, 
then there is a minimal (wrt.\ $\mquote{\succ_{cc}}$) closure $\cct$ such that $\rmodel\infty \not\models \cct$. 
We will show by case analysis how the existence of this closure leads to a contradiction, 
if the set is saturated up to redundancy. 
First, some lemmas. 

\begin{lemma}\label{lem:model:conv}
$\rmodel\infty$ and all $\rmodel\cct$ are convergent, i.e.\ terminating and confluent.

\begin{proof}
It is terminating since the rewrite relation is contained in $\succ_{t}$, which is well-founded. 
For confluence it is sufficient to show that left hand sides of rules in $\rmodel\infty$ are irreducible in $\rmodel\infty$.
Assume that $l\bto r$ and $l'\bto r'$ are two rules produced by closures $\cct$ and $\cds$ respectively. 
Assume $l$ is reducible by $l'\bto r'$. 
Then $l \suptermeq l'$,  
and since $\succ_{t}$ is a simplification order, then $l \succeq_{t} l'$. 
If $l \succ_{t} l'$ 
then by \criteriaref{b} and \criteriaref{c} we have 
$l \succ_t $ all terms in $D\sigma$, 
therefore all literal closures in $\clos{D\sigma}{\id}$ will be smaller than the literal closure in $\cct$ which produced $l\bto r$ 
(by \Cref{lemma:ord:id-nid}), 
therefore $\cct \succ_{cc}  \clos{D\sigma}{\id} \succeq_{cc} \cds$ (see \Cref{lemma:ord:2a}).
But then $\cct$ could not be productive due to \criteriaref{e}. 
If $l=l'$ then both rules can reduce each other, and again due to \criteriaref{e} whichever closure is larger would not be productive. 
In either case we obtain a contradiction. 
%
%
\qed \end{proof}
\end{lemma}

\begin{lemma}\label{lem:model:up}
If $\rmodelaug\cct \models \cct$, then $\rmodel\cds \models \cct$ for any $\cds \succ_{cc} \cct$, and $\rmodel\infty \models \cct$. 
\todo{Replaced $\rmodel\cct \models \cct$ by $\rmodelaug\cct \models \cct$, since this is how it's used below.}
\end{lemma}

\begin{proof}
If a positive literal $s\beq t$ of $C\theta$ is true in $\rmodelaug\cct$, then 
$\joinable[\rmodelaug\cct] st$. 
Since no rules are ever removed during the model construction, then $\joinable[\rmodel\cds] st$ and $\joinable[\rmodel\infty] st$. 

If a negative literal $\clos*{s\bne t}\theta$ of $\cct$ is true in $\rmodelaug\cct$, 
then $\njoinable[\rmodelaug\cct]{s\theta}{t\theta}$. 
Wlog.\ assume that $s\theta\succ_t t\theta$.
Consider a productive closure $\cds \succ_{cc} \cct$ that produced a rule $l\sigma\bto r\sigma$. 
Let us show that $l\sigma\bto r\sigma$ cannot reduce $s\theta \bne t\theta$. Assume otherwise. 
By \criteriaref{b}, $l\sigma \beq r\sigma$ is strictly maximal in $D\sigma$, so 
if $l\sigma\bto r\sigma$ reduces either $t\theta$ or a strict subterm of $s\theta$, 
meaning $l\sigma \prec_t s\theta$, 
then clearly $s\theta \succ_t $ all terms in $D\sigma$, 
therefore $\clos{(s\bne t)\theta}\id \succeq_{lc} \clos*{s\bne t}\theta \succ_{lc} \text{all literals in $\clos{D\sigma}{\id}$} 
\succeq_{lc} \text{respective literals in $\clos{D}{\sigma}$}$ 
(Lemmas \ref{lemma:ord:2a} and \ref{lemma:ord:id-nid}), 
which contradicts $\cds \succ_{cc} \cct$ regardless of whether any of them is unit. 
If $l\sigma = s\theta$, 
then $M_{lc}(\clos*{s\bne t}\theta) = \{ \clos s\theta , \clos t\theta , \clos{s\theta}\id , \clos{t\theta}\id \} \ssucc_{tc} \{ \clos {l\sigma}{\id}, \clos {r\sigma}{\id} \} = M_{lc}(\clos{(l\beq r)\sigma}{\id})$, 
since $s\theta = l\sigma \succ_{t} t\sigma$ 
implies $\clos{s\theta}\id = \clos{l\sigma}\id$, and $\clos{s}{\theta} \succ_{tc} \clos{r\sigma}{\id}$. 
Hence, by \Cref{lemma:ord:2a}, ${(s\bne t)\theta}\cdot\id \succeq_{lc} \clos{(s\bne t)}{\theta} \succ_{lc} \clos{(l\beq r)\sigma}{\id} \succeq_{lc} \clos{(l\beq r)}{\sigma}$, 
contradicting $\cds \succ_{cc} \cct$ 
(again regardless of either of them being a unit).
\qed\end{proof}

\begin{lemma}\label{lem:model:upneg}
If $\cct = \clos*{C' \vee l\beq r}\theta$ is productive, 
then $\rmodel\cds \not\models \clos {C'}\theta$ for any $\cds \succ_{cc} \cct$, and $\rmodel\infty \not\models \clos {C'}\theta$. 

\begin{proof}
All literals in $\clos {C'}\theta$ are false in $\rmodelaug\cct$ by \criteriaref{d}. 
For all negative literals $\clos*{s \bne t}\theta$ in $\clos{C'}\theta$, if they are false then $\joinable[\rmodelaug\cct] {s\theta}{t\theta}$. 
Since no rules are ever removed during the model construction then $\joinable[\rmodel\cds] {s\theta}{t\theta}$ and $\joinable[\rmodel\infty] {s\theta}{t\theta}$. 

For all positive literals $\clos*{s \beq t}\theta$ in $\clos{C'}\theta$, if they are false in $\rmodelaug\cct$ then $\njoinable[\rmodelaug\cct] {s\theta}{t\theta}$. 
Two cases arise. 
If $\cct$ is unit, then $C' = \emptyset$, so $\clos{C'}\theta$ is trivially false in any interpretation. 
If $\cct$ is nonunit, then consider any productive closure $\cds \succ_{cc} \cct$ that produces a rule $l'\sigma \bto r'\sigma$, 
by definition $\cds \succ_{cc} \clos{C\theta}\id$ 
and by \Cref{lemma:ord:id-nid} $D\sigma \succ_{c} C\theta$. 
Since $l\theta \beq r\theta$ is strictly maximal in $C\theta$ 
then $l'\sigma \succ l\theta \succ \text{any term in $C\theta$}$. 
Therefore $l'\sigma \bto r'\sigma$ cannot reduce $s\theta$ or $t\theta$. 
\qed\end{proof}
\end{lemma}


We are now ready to prove the main proposition 
by induction on closures 
(see \Cref{lemma:ord:reduction-rel}), 
namely that for all $\cct \in G$ we have $\rmodel\infty \models \cct$. 
We will show a stronger result: 
that for all $\cct \in G$ we have $\rmodelaug\cct \models \cct$ 
(the former result follows from the latter by \Cref{lem:model:up}). 
\todo{Reviewer 2 says that the latter assertion is not stronger than the former, but I don't understand why he says that.}
If this is not the case, then there exists a minimal counterexample $\cct \in G$ which is false in $\rmodelaug\cct$. 

Notice that, 
since by induction hypothesis all closures $\cds \in G$ such that $\cds \prec_{cc} \cct$ have $\rmodelaug\cds \models \cds$, 
then by \Cref{lem:model:up} we have $\rmodel\cct \models \cds$ 
(and $\rmodelaug\cct \models \cds$). 
%
Consider the following cases. 
%
\begin{thcase}\label{case:1} $C$ is redundant. 
\begin{proof}
By definition, $\cct$ follows from smaller closures in $G$. 
But if $\cct$ is the minimal closure 
which is false in $\rmodelaug\cct$, 
then all smaller 
$\cds$ are true in $\rmodelaug\cds$, 
which (as noted above) means 
that all smaller $\cds$ are true in $\rmodel\cct$, 
which means $\cct$ is true in $\rmodel\cct$, 
which is a contradiction. 
%
\qed\end{proof}
\end{thcase}

\begin{thcase} $C$ contains a variable $x$ such that $x\theta$ is reducible. 
	
\begin{proof}
Then $\rmodelaug\cct$ contains a rule which reduces $x\theta$ to a term $t$.
Let $\theta'$ be identical to $\theta$ except that it maps $x$ to $t$. 
Then $C\theta' \prec C\theta$, 
so $\clos{C}{\theta'} \prec \cct$ (see \Cref{lemma:ord:ext}), 
and therefore $\clos C{\theta'}$ is true in $\rmodel\cct$. 
But $\clos{C}{\theta'}$ is true in $\rmodelaug\cct$ iff $\cct$ in $\rmodelaug\cct$, 
since $\joinable[\rmodelaug\cct] {x\theta}{t}$, 
therefore $\cct$ is also true in $\rmodelaug\cct$, which is a contradiction. 
%
\qed\end{proof}
\end{thcase}


\begin{thcase}\label{case:inference}
\todo{New case, simplifies rest of proof. [Updated]}
%
%
There is reductive inference 
$C,C_1,\dotsc \infers D$ 
which is redundant, 
such that $\{C,C_1,\dotsc\} \subseteq N$, 
$\cct$ is maximal in $\{\cct\relcomma\clos{C_1}\theta \relcomma \dotsc \}$,  
and $\cdt \models \cct$. 

\begin{proof}
Then $\cdt$ is implied by closures in $G$ smaller than $\cct$. 
But since those closures are true in $\rmodelaug\cct$, then $\cdt$ is true, 
and since $\cdt$ implies $\cct$, then $\cct$ is true in $\rmodelaug\cct$, which is a contradiction. 
\qed\end{proof}

\end{thcase}

\newcommand\refcases{\autoref{case:inference}}

\begin{thcase} Neither of the previous cases apply, and $C$ contains a \emph{negative} literal which is selected in the clause, i.e., $\clos C\theta = \clos*{C' \vee s \bne t}\theta$ with $s \bne t$ selected in $C$. 
	
\begin{proof}
Then either 
$\njoinable[\rmodel\cct] {s\theta}{t\theta}$ and $\cct$ is true and we are done, or else 
$\joinable[\rmodel\cct] {s\theta}{t\theta}$. 
Wlog., let us assume 
$s\theta \succeq t\theta$.

\begin{thsubcase} $s\theta = t\theta$. 

\begin{proof}
Then $s$ and $t$ are unifiable, 
meaning that there is an equality resolution inference 
\begin{equation}\label{key}
C' \vee s \bne t \infers C'\sigma \qc{\text{with $\sigma = \mgu(s,t)$,}}
\end{equation}
with premise in $N$. 

Take the instance $\clos{C'\sigma}\rho$ of the conclusion such that $\sigma\rho = \theta$; it always exists since $\sigma=\mgu(s,t)$. 
\todo{Idempotence of mgu is necessary to match the definition of redundant inference.}
Also, since the mgu is idempotent \cite{termrewriting} 
then $\sigma\theta = \sigma\sigma\rho = \sigma\rho$, 
so $\clos{C'\sigma}\rho = \clos{C'\sigma}\theta$. 
We show that $\cct = \clos*{C' \vee s\bne t}{\sigma\rho} \succ \clos{C'\sigma}\rho = \clos{C'\sigma}\theta$. 
If $C'$ is empty, then this is trivial. 
If $C'$ has more than 1 element, then this is also trivial (see \Cref{lemma:ord:id-id}). 
If $C'$ has exactly 1 element, then let 
$C' = \{ s'\bdoteq t' \}$. 
We have $\clos*{s'\bdoteq t' \vee s\bne t}{\sigma\rho} \succ \clos{(s'\bdoteq t')\sigma}\rho$
if $\clos{(s'\bdoteq t')\sigma\rho}\id                  \succeq \clos{(s'\bdoteq t')\sigma}\rho$, 
which is true by \Cref{lemma:ord:2a}. 
Notice also that if $\clos{C'\sigma}\rho$ is true then $\clos*{C' \vee \dotsb}{\sigma\rho}$ must also be true. 

Recall that \refcases\ does not apply. 
But we have shown that this inference is reductive, 
with $C \in N$, 
$\cct$ trivially maximal in $\{\cct\}$, 
and that the instance $\clos{C'\sigma}\theta$ of the conclusion implies $\cct$. 
So for \autoref{case:inference} not to apply the inference must be non-redundant. 
Also since \autoref{case:1} doesn't apply then the premise is not redundant. 
This means that the set is not saturated, which is a contradiction. 
\qed\end{proof}
\end{thsubcase}

\begin{thsubcase} $s\theta \succ t\theta$. \label{case:eqres1}
\begin{proof}
Then (recall that $\joinable[\rmodel\cct] {s\theta}{t\theta}$) $s\theta$ must be reducible by some rule in $\rmodelaug\cct$. 
Since by \criteriaref{b} the clause cannot be productive, 
it must be reducible by some rule in $\rmodel\cct$. 
Let us say that this rule is $l\theta \bto r\theta$, 
produced by a closure $\cdt$ 
smaller than $\cct$.%
\footnote{We can use the same substitution $\theta$ on both $C$ and $D$ by simply assuming wlog.\ that they have no variables in common.} 
Therefore closure $\cdt$ must be of the form $\clos{(D' \vee l\beq r)}\theta$, 
with $l\theta \beq r\theta$ maximal in $D\theta$, 
and $\clos{D'}\theta$ false in $\rmodel\cdt$. 
Also note that $\cdt$ cannot be redundant, or else it would follow from smaller closures, 
but those closures (which are smaller than $\cdt$ and therefore smaller than $\cct$) 
would be true, so $\cdt$ would be also true in $\rmodel\cdt$, so by \criteriaref{a} it would not be productive. 

Then $l\theta = u\theta$ for some subterm $u$ of $s$,  
meaning $l$ is unifiable with $u$, 
meaning there exists a superposition inference
\begin{equation}\label{key}
D' \vee l \beq r \relcomma 
C' \vee s[u] \bne t \infers
(D' \vee C' \vee s[u\bmapsto r] \bne t)\sigma
\qc{\text{$\sigma=\mgu(l,u)$,}}
\end{equation}

Similar to what we did before, consider the instance $(D' \vee C' \vee s[u\bmapsto r] \bne t)\sigma \cdot \rho$ with $\sigma\rho = \theta$.%
\footnote{And again note that the $\mgu$ $\sigma$ is idempotent so $(D' \vee C' \vee s[u\bmapsto r] \bne t)\sigma \cdot \rho = (D' \vee C' \vee s[u\bmapsto r] \bne t)\sigma \cdot \theta$.}
%
%
We wish to show that 
this instance of the conclusion is smaller than $\cct$ (an instance of the second premise), 
that is that
\begin{equation}
\clos{(C' \vee s \bne t)}{\sigma\rho} \;\succ\; \clos{(D' \vee C' \vee s[u\bmapsto r] \bne t)\sigma}\rho \,.
\end{equation}
Several cases arise: 
\begin{itemize}[wide,nosep,label=\textbullet]

\item $C' \ne \emptyset$. 
Then both premise and conclusion are non-unit, so comparing them means comparing 
$C'\theta \vee s\theta\bne t\theta$ and $D'\theta \vee C'\theta \vee s\theta[u\theta \bmapsto r\theta] \bne t\theta$ 
(\Cref{lemma:ord:id-id}), 
or after removing common elements, comparing 
$s\theta\bne t\theta$ and $D'\theta \vee s\theta[u\theta \bmapsto r\theta] \bne t\theta$. 
This is true since (i) $l\theta \succ r\theta 
\imp s\theta[l\theta] \succ s\theta[l\theta\bmapsto r\theta] 
\imp s\theta \bne t\theta \succ s\theta[l\theta\bmapsto r\theta] \bne t\theta$, 
and (ii) 
$s\theta \succeq l\theta \succ r\theta$ and 
$l\theta \beq r\theta$ is greater than all literals in $D'\theta$, 
so $s\theta \bne t\theta$ is greater than all literals in $D'\theta$. 


\item $C' = \emptyset$ and $D' \ne \emptyset$. 
Then we need 
$\clos{(s \bne t)}{\sigma\rho} \succ \clos{(D' \vee s[u\bmapsto r] \bne t)\sigma\rho}\id$. 
By \Cref{lemma:ord:id-nid}, this is true only if 
$s\theta \bne t\theta \succ D'\theta \vee s\theta[u\theta \bmapsto r\theta] \bne t\theta$. 
To see that this is true we must also notice that, 
since $\cdt \prec \cct$, then 
(again by \Cref{lemma:ord:id-nid}) 
$D'\theta \vee l\theta \beq r\theta \prec s\theta \bne t\theta$ 
must also hold, 
so $\{s\theta \bne t\theta\} \succ D'\theta$. 
Then obviously $\{s\theta \bne t\theta\} \succ \{s\theta[u\theta \bmapsto r\theta] \bne t\theta\}$. 

\item $C' = \emptyset$ and $D' = \emptyset$. 
Then simply $s\theta[u\theta] \succ s\theta[u\theta\bmapsto r\theta]$ 
means $\clos {s[u]}{\sigma\rho} \succ \clos {s[u\bmapsto r]\sigma}{\rho}$, 
which since $s\sigma\rho \succ t\sigma\rho$, 
means $\clos*{s[u]\bne t}{\sigma\rho} \succ \clos{(s[u\bmapsto r]\bne t)\sigma}\rho$. 
\end{itemize}

%
%

In all these cases this instance of the conclusion is always smaller than the instance $\cct$ of the second premise. 
%
%
%
Note also that $\cct$ is maximal in $\{\cct\relcomma\cdt\}$. 
Also, since $\clos {D'}\theta$ is false in $\rmodel\cct$ (by \Cref{lem:model:upneg}) 
and $(s[u\bmapsto r] \bne t)\cdot\theta$ is false in $\rmodel\cct$ 
(since $\clos*{s \bne t}\theta$ is in the false closure $\cct$, $\joinable[\rmodel\cct]{u\theta}{r\theta}$, and the rewrite system is confluent), 
then in order for that instance of the conclusion to be true in $\rmodel\cct$ it must be the case that $\clos{C'\sigma}\rho$ is true in $\rmodel\cct$. 
But if the latter is true then $\cct = \clos*{C' \vee \dotsm}{\sigma\rho}$ is true, in $\rmodel\cct$. 
In other words that instance of the conclusion implies $\cct$. 
Therefore again, since \autoref{case:1} and \autoref{case:inference} don't apply, we conclude that the inference is non-redundant with non-redundant premises, so the set is not saturated, which is a contradiction. 
\qed\end{proof}
\end{thsubcase} 

\noindent This proves all subcases. 
\qed\end{proof}
\end{thcase}

\begin{thcase} Neither of the previous cases apply, so all selected literals in $C$ are positive, 
i.e., $\clos C\theta = \clos*{C' \vee s \beq t}\theta$ 
with $s \beq t$ selected in $C$. 
\todo{Here relate maximality of $s\beq t$ in $C$ and $s\theta \beq t\theta$ in $C\theta$}

\begin{proof}
Then, since if the selection function doesn't select a negative literal then it must select all maximal ones, 
wlog.\ one of the selected literals $s\beq t$ must have $s\theta \beq t\theta$ is maximal in $C\theta$. 
Then if either 
$\clos{C'}\theta$ is true in $\rmodel\cct$, 
or $\epsilon_\cct = \{ s\theta \bto t\theta \}$, 
or $s\theta = t\theta$, 
then $\cct$ is true in $\rmodelaug\cct$ and we are done. 
Otherwise, 
$\epsilon_\cct = \emptyset$, 
$\clos{C'}\theta$ is false in $\rmodel\cct$, 
and wlog.\ $s\theta \succ t\theta$. 
If $s\beq t$ is maximal in $C$ then $s\theta \beq t\theta$ is maximal in $C\theta$. 

\begin{thsubcase} $s\theta \beq t\theta$ maximal but not strictly maximal in $C\theta$. 

\begin{proof}
If this is the case, then there is at least one other maximal positive literal in the clause. 
Let $\cct = \clos{(C'' \vee s\beq t \vee s'\beq t')}\theta$, 
where $s\theta = {s'}\theta$ and $t\theta = {t'}\theta$. 
Therefore $s$ and $s'$ are unifiable and there is an equality factoring inference:
\begin{equation}\label{key}
C'' \vee s\beq t \vee s'\beq t' \infers
(C'' \vee s\beq t \vee t\bne t')\sigma
\qc{\text{with $\sigma = \mgu(s,s')$,}}
\end{equation}
with $\sigma = \mgu(s,s')$. 
%
%
Take the instance of the conclusion $\clos{(C'' \vee s\beq t \vee t\bne t')\sigma}\rho$ with $\sigma\rho = \theta$. 
This is smaller than $\cct$ 
(since $s'\theta \beq t'\theta \succ t\theta \bne t'\theta$, and \Cref{lemma:ord:id-id} applies). 
Since $t\theta = t'\theta$ and $\clos{C''\sigma}\rho$ is false in $\rmodel\cct$, 
this instance of the conclusion is true in $\rmodel\cct$ iff $\clos*{s\sigma \beq t\sigma}\rho$ is true in $\rmodel\cct$. 
But if the latter is true in $\rmodel\cct$ then $\clos*{s\beq t \vee \dotsb}{\sigma\rho}$ also is. 
Therefore that instance of the conclusion implies $\cct$. 
As such, and since again Cases \ref{case:1} and \ref{case:inference} do not apply, we have a contradiction. 
\qed\end{proof}
\end{thsubcase}

\begin{thsubcase} $s\theta \beq t\theta$ strictly maximal in $C\theta$, and $s\theta$ reducible (in $\rmodel\cct$). 

\begin{proof}
This is similar to \autoref{case:eqres1}. 
If $s\theta$ is reducible, say by a rule $l\theta \bto r\theta$, 
then (since $\epsilon_\cct = \emptyset$) this is produced by some closure $\cdt$ smaller than $\cct$, 
with $\cdt = \clos*{D' \vee l \beq r}\theta$, 
with the $l\theta \beq r\theta$ maximal in $D\theta$, and with $\clos {D'}\theta$ false in $\rmodel\cdt$. 

Then there is a superposition inference 
\begin{equation}\label{key}
D' \vee l \beq r \relcomma 
C \vee s[u] \beq t \infers
(D' \vee C' \vee s[u\bmapsto r] \beq t)\sigma
\qc{\text{$\sigma=\mgu(l,u)$,}}
\end{equation}
Again taking the instance $(D' \vee C' \vee s[u\bmapsto r] \beq t)\sigma \cdot \rho$ with $\sigma\rho = \theta$, 
we see that it is smaller than $\cct$ (see discussion in \autoref{case:eqres1}).  
%
%
Furthermore since $\clos {D'}\theta$ and $\clos{C'}\theta$ are false in $\rmodel\cct$,  
then that instance of the conclusion is true in $\rmodel\cct$ iff $\clos{(s[u\bmapsto r] \beq t)\sigma}\rho$ is. 
But since also $\joinable[\rmodel\cct]{u\theta}{r\theta}$, 
then $\clos{(s[u\bmapsto r] \beq t)\sigma}\rho$ implies $\clos{(s[u] \beq t)\sigma}\rho$. 
Therefore that instance of the conclusion implies $\cct$. 
Again this means we have a contradiction. 
\qed\end{proof}
\end{thsubcase}

\begin{thsubcase} $s\theta \beq t\theta$ strictly maximal in $C\theta$, and $s\theta$ irreducible (in $\rmodel\cct$). 

\begin{proof}
Since $\cct$ is not productive, 
and at the same time all criteria in \eqref{eq:criteria} except \criteriaref{d} are satisfied, 
it must be that \criteriaref{d} is not, 
that is $\clos{C'}\theta$ must be true in $\rmodelaug\cct = \rmodel\cct \cup \{ s\theta \bto t\theta \}$. 
Then this must mean we can write $\clos{C'}\theta = \clos*{C'' \vee s'\beq t'}\theta$, 
where the latter literal is the one that becomes true with the addition of $\{s\theta \bto t\theta\}$, 
whereas without that rule it was false. 

But this means that $\joinable[\rmodelaug\cct] {s'\theta}{t'\theta}$ such that any rewrite proof needs at least one step where $s\theta \bto t\theta$ is used, since $s\theta$ is irreducible by $\rmodel\cct$.
Wlog.\ say $s'\theta \succ t'\theta$. 
Since: 
(i) $s\theta \beq t\theta \succ s'\theta \beq t'\theta$, 
(ii) $s\theta \succ t\theta$, and
(iii) $s'\theta \succ t'\theta$, 
then $s\theta \succeq s'\theta \succ t'\theta$, which implies $t'\theta \centernot{\suptermeq} s\theta$, which implies $s\theta \bto t\theta$ can not be used to reduce $t'\theta$. 
Then the only way it can reduce $s'\theta$ or $t'\theta$ is if $s\theta = s'\theta$. 
This means there is an equality factoring inference:
\begin{equation}\label{key}
C'' \vee s'\beq t' \vee s\beq t \infers
(C'' \vee s'\beq t' \vee t\bne t'
)\sigma
\qc{\text{with $\sigma = \mgu(s,s')$.}}
\end{equation}
Taking $\theta = \sigma\rho$, we see that the instance of the conclusion 
$\clos{(C'' \vee t\bne t' \vee s\beq t)\sigma}\rho$ 
is smaller than the instance of the 
$\clos{(C'' \vee s'\beq t' \vee s\beq t )}\sigma\rho$. 

%
But we have said that $\joinable[\rmodelaug\cct] {s'\theta}{t'\theta}$, 
where the first rewrite step had to take place by rewriting $s'\theta = s\theta \to t\theta$, 
and the rest of the rewrite proof then had to use only rules from $\rmodel\cct$. 
In other words, this means $\joinable[\rmodel\cct] {t\theta}{t'\theta}$. 
As such, the literal $\clos*{t\bne t'}\theta$ is false in $\rmodel\cct$, 
and so the conclusion is true in $\rmodel\cct$ iff rest of the closure is true in $\rmodel\cct$. 
But if the rest of the closure $\clos{(C'' \vee s'\beq t')\sigma}\rho$ then so is $\cct$, 
so that instance of the conclusion implies $\cct$. 
Once again, this leads to a contradiction since none Cases \ref{case:1} and \ref{case:inference} apply and therefore the set must not be saturated. 
\qed\end{proof}
\end{thsubcase}

\noindent This proves all the subcases and the theorem.
\qed\end{proof}
\end{thcase}

\end{proof}
\end{theorem}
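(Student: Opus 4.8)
The plan is to adapt the Bachmair--Ganzinger model construction~\cite{DBLP:journals/logcom/BachmairG94} to the closure setting, replacing the clause ordering $\succ_c$ everywhere by the closure ordering $\succ_{cc}$. Let $N$ be saturated up to closure redundancy with $\bot \notin N$, and let $G = \GClos(N)$ be the set of all ground closures of clauses in $N$. Since $\succ_{cc}$ is well-founded and total on ground clause closures (\Cref{lemma:ord:reduction-rel}), I can construct a candidate model by well-founded recursion over $\succ_{cc}$: for each $\cct \in G$, having already defined the partial rewrite system $\rmodel\cct = \bigcup_{\cds \prec_{cc} \cct} \epsilon_{\cds}$ from the strictly smaller closures, I decide whether $\cct$ \emph{produces} a rule $l\theta \bto r\theta$ according to the usual productivity conditions (the closure is false in $\rmodel\cct$; it has a strictly maximal positive literal $l\theta \beq r\theta$ with $l\theta \succ_t r\theta$; the remaining literals stay false after adding the rule; and $l\theta$ is irreducible). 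The intended model is $\rmodel\infty = \bigcup_{\cds \in G} \epsilon_{\cds}$, with truth of a closure read off by joinability of its literals in the rewrite system.

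First I would establish the structural facts that do not depend on saturation. The system is terminating because every rule is contained in $\succ_t$, and confluence follows once I show that left-hand sides of produced rules are irreducible by other rules; here the irreducibility condition, together with $\succ_t$ being a simplification order and the maximality conditions, forces the larger of any two interfering producers to violate its own productivity requirement. I would also prove the two monotonicity lemmas that drive the induction: if a closure is true in its augmented model $\rmodelaug\cct$ then it stays true in every larger $\rmodel\cds$ and in $\rmodel\infty$ (no rule is ever deleted), and dually that a producing closure keeps the rest of its clause false upward. These let me move freely between partial models and the full model.

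The heart of the argument is the induction: assuming a minimal counterexample $\cct \in G$ that is false in $\rmodelaug\cct$, I would derive a contradiction by case analysis on \emph{why} $\cct$ is false. The cases mirror the classical proof but are phrased with closures: (i) $C$ is closure redundant, so $\cct$ follows from strictly smaller closures, all of which are true, a contradiction; (ii) some variable of $C$ is instantiated to a reducible term, which lets me replace $\theta$ by a strictly $\succ_{cc}$-smaller grounding and descend; (iii) a selected negative literal $s\theta \bne t\theta$ is joinable, yielding either an equality resolution inference (when $s\theta = t\theta$) or a superposition into $s\theta$ from a smaller producing closure; and (iv) all selected literals are positive and non-productive, yielding an equality factoring or a superposition inference, according to whether $s\theta \beq t\theta$ is strictly maximal and whether $s\theta$ is reducible. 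In each inference case I would exhibit the ground instance of the conclusion coming from the common grounding $\theta = \sigma\rho$ (using idempotence of the mgu $\sigma$), show it is $\succ_{cc}$-smaller than $\cct$, and show it implies $\cct$ in $\rmodel\cct$; then saturation forces the inference to be closure redundant, and since the premises are non-redundant its conclusion follows from closures smaller than the maximal premise closure $\cct$, all true, making $\cct$ true --- the contradiction. I would fold the redundant-inference bookkeeping into a single auxiliary case, observing via \Cref{lem:redund} that these inferences are reductive.

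The main obstacle is precisely the ordering comparison inside the inference cases. Unlike $\succ_c$, the closure ordering $\succ_{cc}$ is \emph{not} an extension of the literal/clause ordering (\Cref{lemma:ord:ext}), so I cannot simply invoke ``the conclusion is smaller than the maximal premise'' as in the ground case. Instead every comparison has to be re-derived from the definition of $\succ_{cc}$ through the technical lemmas --- in particular \Cref{lemma:ord:2a} ($\clos{t\sigma}\id \succeq_{tc} \clos t\sigma$) and \Cref{lemma:ord:id-nid}, relating a closure to its fully-instantiated identity form --- and I must separate the unit and non-unit cases carefully, since $M_{cc}$ treats them differently. Getting these inequalities right, so that the instance of the conclusion is genuinely $\succ_{cc}$-below $\cct$ while still implying it, is where the real work lies; the remainder is a faithful transcription of the Bachmair--Ganzinger template.
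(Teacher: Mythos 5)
Your proposal is correct and follows essentially the same route as the paper's own proof: the same closure-based Bachmair--Ganzinger construction with productivity conditions, the same convergence and upward-persistence lemmas, the same minimal-counterexample induction with the redundant-clause, reducible-variable, redundant-reductive-inference, and inference cases, and the same resolution of the key difficulty --- that $\mquote{\succ_{cc}}$ does not extend $\mquote{\succ_c}$ --- via \Cref{lemma:ord:2a} and \Cref{lemma:ord:id-nid} together with the unit/non-unit distinction and mgu idempotence. Nothing of substance differs.
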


\noindent\emph{Remark:} As part of this proof we have also shown that all inferences in the superposition system are reductive, so per \Cref{lem:redund} one way to make inferences redundant is simply to add the conclusion. 


\section{Redundancies}\label{sec:redundancies}

Now we will show three novel redundancy criteria whose proof is enabled by the framework we have just discussed. 
One is an extension of the demodulation rule, used in many different provers. 

\subsection*{Demodulation} \label{sec:demodulation}

Recall the ``standard'' demodulation rule 
(a struck clause means that it can be removed from the set when the conclusion is added). 
\begin{align}
& \text{Demodulation} &
& \vcenter{\prftree{l \beq r}{\cancel{C[l\theta]}}{C[l\theta \bmapsto r\theta]}} 
\qcomma{\begin{tabular}{@{}l@{}}
		where $l\theta \succ r\theta$ \\ 
		and $\{l\theta\beq r\theta \} \prec C[l\theta]$.
\end{tabular}}
\\ \intertext{We show an extension which is also a redundancy in this framework.}
& \parbox{1em}{Encompassment \\ Demodulation} &
& \vcenter{\prftree{l \beq r}{\cancel{C[l\theta]}}{C[l\theta \bmapsto r\theta]}} 
\qcomma{\begin{tabular}{@{}l@{}}
	where $l\theta \succ r\theta$, and \\ 
	either $\{l\theta\beq r\theta \} \prec C[l\theta]$  \\
	or $l\theta \lessgen l$.
\end{tabular}}
\end{align}

\begin{theorem}\label{th:encom:demod}
Encompassment demodulation is a sound and admissible simplification rule wrt.\ closure redundancy 
(a redundancy criterion is admissible if its struck premises are redundant wrt.\ the conclusion and the non-struck premises). 
\end{theorem}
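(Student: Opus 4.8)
The plan is to prove soundness and admissibility separately. Soundness is the standard observation that $C[l\theta\bmapsto r\theta]$ together with $l\beq r$ entails $C[l\theta]$, so I concentrate on admissibility, i.e.\ showing that the struck premise $C[l\theta]$ is closure redundant in $\{\,l\beq r,\ C[l\theta\bmapsto r\theta]\,\}$. Fix an arbitrary grounding $\tau$ and consider the ground closure $\clos{C[l\theta]}\tau$. Under the usual well-formedness assumptions ($\operatorname{vars}(r)\subseteq\operatorname{vars}(l)$ and $\operatorname{vars}(l\theta)\subseteq\operatorname{vars}(C[l\theta])$) the substitution $\theta\tau$ grounds $l\beq r$, so I take the witnessing set $G=\{\,\clos{l\beq r}{\theta\tau},\ \clos{C[l\theta\bmapsto r\theta]}\tau\,\}\subseteq\GClos(\{l\beq r,\,C[l\theta\bmapsto r\theta]\})$. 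Since $l\theta\tau=r\theta\tau$ holds in any model of $\clos{l\beq r}{\theta\tau}$, and $C[l\theta]\tau$ and $C[l\theta\bmapsto r\theta]\tau$ differ only by replacing the equal ground terms $l\theta\tau$ and $r\theta\tau$, we get $G\models\clos{C[l\theta]}\tau$. It then remains to show that both closures in $G$ are $\prec_{cc}\clos{C[l\theta]}\tau$.

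For the conclusion closure this is immediate from the orientation hypothesis $l\theta\succ r\theta$: since $l\theta\tau\succ_t r\theta\tau$, \Cref{lemma:ord:rewrite-rel} gives $\clos{C[l\theta]}\tau\succ_{cc}\clos{C[l\theta\bmapsto r\theta]}\tau$ directly, rewriting $l\theta\tau$ to $r\theta\tau$ inside the fixed closure. The core of the proof is the second inequality, $\clos{l\beq r}{\theta\tau}\prec_{cc}\clos{C[l\theta]}\tau$, and here the treatment splits according to whether $C[l\theta]$ is a unit clause, because $M_{cc}$ represents unit and non-unit clauses differently (substitution $\tau$ for units, $\id$ for non-units).

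The decisive observation is a comparison of the demodulator's maximal term closure $\clos{l}{\theta\tau}$ against the term closure occurring at the rewrite position of the target: in the unit case this is $\clos{l\theta}\tau$, and in the non-unit case it is $\clos{l\theta\tau}\id$, but in both cases it represents the \emph{same} ground term $l\theta\tau$ as $\clos{l}{\theta\tau}$. By the tie-break clause of \eqref{eq:closord:1}, the closure with the more general base is the smaller one, so $\clos{l}{\theta\tau}\preceq_{tc}\clos{l\theta}\tau$ (resp.\ $\preceq_{tc}\clos{l\theta\tau}\id$), with strict inequality exactly when $l\theta\lessgen l$. The remaining side $r$ of the demodulator is dominated because $l\theta\tau\succ_t r\theta\tau$ (via \Cref{lemma:ord:ext} and \Cref{lemma:ord:id-nid}), and any proper superterm context around $l\theta$ only enlarges the target term by the subterm property. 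Assembling these into the multiset comparisons $M_{lc}$ and then $M_{cc}$ yields $\clos{l\beq r}{\theta\tau}\prec_{cc}\clos{C[l\theta]}\tau$ in the encompassment case $l\theta\lessgen l$ — precisely the new case illustrated by \Cref{ex:closure:redund}. When instead only the standard side condition $\{l\theta\beq r\theta\}\prec C[l\theta]$ holds, the rewrite-position term closures may tie (when $\theta$ merely renames $l$), and I close the gap using the ordering decrease the side condition supplies on the remaining literals/sides: for a non-unit target this lifts to the $\id$-closures via \Cref{lemma:ord:id-id} (bounding $\clos{l\beq r}{\theta\tau}\preceq_{lc}\clos{(l\theta\beq r\theta)\tau}\id$ with \Cref{lemma:ord:2a} and dominating it by the larger target literal), while for a unit target the $\tau$-closures are compared side-by-side, the side condition forcing the non-$l\theta$ side to dominate $r\theta\tau$.

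The main obstacle is bookkeeping rather than conceptual: one must keep the demodulator's left-hand side $l$ \emph{uninstantiated} inside its closure throughout, since prematurely collapsing $\clos{l\beq r}{\theta\tau}$ to $\clos{(l\theta\beq r\theta)\tau}\id$ via \Cref{lemma:ord:2a} destroys exactly the generality information that makes the encompassment case work — indeed, for a non-unit target in which $r\theta\tau$ exceeds the other side of the rewritten literal, that collapsed bound is genuinely too weak. The two representations of clauses in $M_{cc}$ therefore have to be carried through the unit/non-unit split, and within each the standard and encompassment side conditions handled separately as above; none of the four resulting combinations presents further difficulty once the single term-closure comparison $\clos{l}{\theta\tau}\preceq_{tc}\clos{l\theta}\tau$ is in place.
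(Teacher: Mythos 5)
Your proposal is correct and matches the paper's own argument in all essentials: both reduce admissibility to showing, for every grounding, that $\clos{C[l\theta\bmapsto r\theta]}\rho \prec_{cc} \clos{C[l\theta]}\rho$ (via \Cref{lemma:ord:rewrite-rel}) and $\clos{(l\beq r)}{\theta\rho} \prec_{cc} \clos{C[l\theta]}\rho$, with the latter obtained exactly as you do — by the tie-break comparison of $\clos{l}{\theta\rho}$ against the closure at the rewrite position when $l\theta \lessgen l$, by subterm/literal dominance when the standard condition $\{l\theta\beq r\theta\}\prec C$ holds, and with \Cref{lemma:ord:2a} bridging the unit/non-unit treatment of $M_{cc}$. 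The only difference is presentational: the paper runs a sequential case analysis on where $l\theta$ sits in $C$ (and on polarity), thereby also showing the applicability conditions are exhaustive, whereas you case-split directly on which side condition holds.
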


\KK{Appendix references as in the extended version of this paper \cite{}...\url{https://www.cs.man.ac.uk/~korovink/my_pub}}

\begin{proof}
This is a valid redundancy if all ground closures of $C[l\theta]$ follow from some smaller ground closures of $C[l\theta \bmapsto r\theta]$ and $l\beq r$. 
That they follow is trivial. 
The fact that if $l\theta \succ r\theta$ then, for all groundings $\rho$, $\clos{C[l\theta \bmapsto r\theta]}\rho \prec \clos{C[l\theta]}\rho$, also follows from \Cref{lemma:ord:rewrite-rel}. 
It remains to show that, for all $\rho$, $\clos{\{l\beq r\}}{\theta\rho} \prec \clos{C[l\theta]}\rho$.

If there exists a literal $s \bdoteq t \in C$ such that $s \succ l\theta$, 
then (remember $l\theta \succ r\theta$) 
$l\theta \beq r\theta \prec s\bdoteq t \imp 
\{l\theta\beq r\theta\} \prec C$. 
In this case, 
for all ground closures $\clos C\rho$ we have 
the following: 
$s\rho \succ l\theta\rho \succ r\theta\rho 
\imp \clos s\rho \succ \clos l{\theta\rho} \succ \clos r{\theta\rho} 
\imp \clos*{s\bdoteq t}\rho \succ \clos*{l\beq r}{\theta\rho} 
\imp \clos C\rho \succ \clos{\{l\beq r\}}{\theta\rho}$. 

If not, 
hence
$l\theta$ only occurs at a top position in a literal in $C$, 
then all occurrences of $l\theta$ in $C$ are of the form $l\theta \bdoteq s$.  
If there exists at least one such literal where $l\theta \prec s$, 
then again $l\theta \beq r\theta \prec l\theta \bdoteq s \imp \{l\theta\beq r\theta\} \prec C$, and also 
for all $\rho$ we have 
$s\rho \succ l\theta\rho \succ r\theta\rho$ 
as above, and therefore 
$\clos C\rho \succ \clos{\{l\beq r\}}{\theta\rho}$. 

If not,  
hence $l\theta$ always occurs at the top of a maximal side of an equality, 
then still if $l\theta \lessgen l$ 
then for all $\rho$, $\clos{l\theta}\rho \succ \clos l{\theta\rho}$. 
Therefore $\clos*{l\theta \bdoteq s}\rho \succ \clos*{l\beq r}{\theta\rho}$, 
so $\clos C\rho \succ \clos{\{l\beq r\}}{\theta\rho}$ for all $\rho$. 
\AD{Unit/non-unit: Is it necessary to repeat this bit in every case? When we will have nice lemma to refer to we can skip repeating this}
This holds whether or not $C$ is unit, by \Cref{lemma:ord:2a}. 

If not, then 
(since then neither $l\theta \lessgen l$ nor obviously $l \lessgen l\theta$) 
we check if $l\theta \bdoteq s$ is negative. 
If yes, then 
$\{ l\theta, r\theta \} \prec \{l\theta , l\theta , s , s\}$, 
so again 
it is the case that $\{l\theta \beq r\theta\} \prec C$, and also that for all $\rho$, 
$\{ \clos{l\theta}\rho , \clos{l\theta\rho}\id , \clos s\rho , \clos{s\rho}\id \} \succ \{ \clos{l\theta}\rho , \clos{r\theta}\rho \}$, 
therefore 
$\clos*{l\theta \bne s}\rho \succ \clos*{l\beq r}{\theta\rho}$, 
therefore $\clos C\rho \succ \clos{\{l\beq r\}}{\theta\rho}$. 

If not, then we finally need to compare 
$s$ and $r\theta$. 
If $s \succ r\theta$, then 
$l\theta \beq r\theta \prec l\theta \beq s \imp 
\{l\theta \beq r\theta\} \prec C$, 
and also, for all $\rho$, 
$s\rho \succ r\theta\rho 
\imp \clos s\rho \succ \clos r{\theta\rho} 
\imp \clos*{l\theta \beq s}\rho \succ \clos*{l\beq r}{\theta\rho} 
\imp \clos C\rho \succ \clos{\{l\beq r\}}{\theta\rho}$. 
%

If not, then we have $l\theta \not\lessgen l$ and $\{l\theta \beq r\theta\} \nprec C$, so the simplification cannot be applied. 
\qed\end{proof}

This theorem has many practical implications. 
Demodulation is widely used in superposition theorem provers, 
and improvement this criterion provides are two-fold.   

First, it enables strictly more simplifying inferences to be performed where they previously could not.
Let us re-consider \Cref{ex:closure:redund} from \Cref{sec:model}. 
Standard demodulation is not applicable to $f(b)\beq b$ by clauses in  $S=\{f(x)\beq g(x), g(b)\beq b\}$.
However,  we can simplify it to a tautology and remove it completely using encompassment demodulation. Our experimental results (\Cref{sec:experiment}) show that  encompassment demodulation extends usual demodulation in many practical problems. 

Second, it enables a faster way to check the applicability conditions. One of the considerable overheads in the standard demodulation is to check that the equation we are simplifying with is smaller than the clause we are simplifying. 
For this, right-hand side of the oriented equation needs to be compared in the ordering with all top terms in the clause.  
In the encompassment demodulation this expensive check is avoided in many cases. 
After obtaining the matching instantiation $\theta$ of the left side of the oriented equation, if it is not a renaming (a quick check) or the matching is strictly below the top position of the term,  then we can immediately accept the inference and skip potentially expensive ordering checks. 

\subsection*{Associative-commutative joinability} \label{sec:acjoin}

Let ${AC}_f$ be
\begin{subequations} \label{eq:acall}
\begin{align}
f(x,y) &\beq f(y,x) \,, \label{eq:ac1} \\
f(x,f(y,z)) &\beq f(f(x,y),z) \,, \label{eq:ac2} \\
f(x,f(y,z)) &\beq f(y,f(x,z)) \,. \label{eq:ac3}
\end{align}
\end{subequations}
The first two axioms (\ref{eq:ac1}) and (\ref{eq:ac2}) define that $f$ is an associative-commutative (AC) symbol. The third equation (\ref{eq:ac3}) follows from those two and will be used to avoid any inferences between these axioms and more generally to justify AC joinability simplifications defined next.

We define the two following rules: 
\begin{subequations}
\begin{align}
& \text{AC joinability (pos)} &
& \vcenter{\prftree{\cancel{s \beq t \vee C}}{AC_f}{}} 
\qcomma{\tabularbox{
	where $\joinable[AC_f]st$ \\
	$s\beq t \vee C$ not in $AC_f$, 
}} \label{eq:acjoin:pos} \\
& \text{AC joinability (neg)} &
& \vcenter{\prftree{\cancel{s \bne t \vee C}}{AC_f}{C}} 
\qcomma{\tabularbox{
	where $\joinable[AC_f]st$, 
}} \label{eq:acjoin:neg}
\end{align}
\end{subequations}

\begin{theorem}\label{thm:ac:join}
AC joinability rules are sound and admissible simplification rules wrt.\ closure redundancy. 
\KK{move proof into appendix ?}
\end{theorem}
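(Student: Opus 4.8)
The plan is to treat the two claims—soundness and admissibility—separately, and to reduce the harder one, admissibility, to a statement purely about the closure ordering $\succ_{cc}$. Soundness is immediate: since $\joinable[AC_f]st$ we have $AC_f \models s \beq t$, so in the positive rule \eqref{eq:acjoin:pos} the struck clause $s\beq t\vee C$ is already a logical consequence of the retained premises $AC_f$ alone, and in the negative rule \eqref{eq:acjoin:neg} the conclusion $C$ follows from $s\bne t\vee C$ together with $AC_f$, while conversely $C \models s\bne t\vee C$; in both cases the transformation preserves satisfiability. For admissibility I must show the struck clause is closure redundant with respect to the retained premises, i.e.\ that for every grounding $\theta$ the closure $\clos{(s\bdoteq t\vee C)}\theta$ follows from closures in $\GClos(AC_f)$ (and, for the negative rule, $\GClos(C)$) that are strictly $\prec_{cc}$-smaller.

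The core step is to turn the hypothesis $\joinable[AC_f]st$ into an explicit rewrite proof between the ground terms $s\theta$ and $t\theta$ and then to certify each step by a ground instance of an axiom of $AC_f$, viewed as a closure. Reading $\joinable[AC_f]{}{}$ as joinability under the ground-oriented AC system—every ground instance of commutativity \eqref{eq:ac1} oriented by $\succ_t$, associativity \eqref{eq:ac2} oriented downward, and \eqref{eq:ac3} supplying the missing confluence—yields a valley $s\theta \rtrans\to u \rtrans\gets t\theta$ in which every intermediate term is $\preceq_t \max_{\succ_t}\{s\theta,t\theta\}$. Each individual step rewrites a subterm $g$ to an AC-equal $h$ at some position and is an instance of one of \eqref{eq:ac1}--\eqref{eq:ac3} under a matcher $\mu$; the corresponding closure $\clos{(\ell\beq r)}\mu$ represents the literal $g\beq h$. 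By congruence the whole collection of these closures entails $s\theta\beq t\theta$, hence entails $\clos{(s\bdoteq t\vee C)}\theta$. For a step strictly below the root, $g$ is a \emph{proper} subterm, so $\max_{\succ_t}\{g,h\}\prec_t \max_{\succ_t}\{s\theta,t\theta\}$ and the step closure is $\prec_{cc}$ the struck closure by Lemmas~\ref{lemma:ord:id-nid} and~\ref{lemma:ord:2a}; for the negative rule the doubled multiset $M_{lc}(\clos*{s\bne t}\theta)$ provides additional slack, and the residual $\clos C\theta$ is trivially $\prec_{cc}$-below $\clos{(s\bne t\vee C)}\theta$.

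The delicate case, which I expect to be the main obstacle, is a step applied at the \emph{root}, where $g = s\theta$ (or $t\theta$) and the step closure can represent exactly the same ground literal as the target, so the plain term ordering gives no strict decrease. Redundancy then has to be recovered from the closure tie-break \eqref{eq:closord:1}: the axiom side, e.g.\ $f(x,y)$ in \eqref{eq:ac1}, is strictly more general than $s$, so the term closures satisfy $\clos{f(x,y)}\mu \prec_{tc} \clos s\theta$, whence the step's literal closure is $\prec_{lc}$ the target literal closure and, through $M_{cc}$, the step closure is $\prec_{cc}$ the struck closure. This strict generality is precisely what the side condition ``$s\beq t\vee C$ not in $AC_f$'' guarantees, since the only way $s$ could fail to be a proper instance of $f(x,y)$ is for the clause itself to be a renaming of an AC axiom. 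The second role of \eqref{eq:ac3} is to keep this root case from forcing an orientation-violating step: associativity and commutativity alone can demand an upward rewrite whose closure would exceed the target, and \eqref{eq:ac3} lets the proof route around it while staying below $\max_{\succ_t}\{s\theta,t\theta\}$. Assembling these bounds—proper-subterm steps strictly smaller by the subterm property, root steps strictly smaller by generality, and all intermediate terms kept below the maximum by downward orientation—shows every certifying closure is $\prec_{cc}$ the struck closure, establishing closure redundancy and hence admissibility.
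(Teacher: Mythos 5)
Your overall architecture is recognisably close in spirit to the paper's: both arguments certify each rewrite step of an oriented AC valley by a closure of an $AC_f$ axiom, handle steps strictly below the root via the term ordering, and isolate root steps as the delicate case to be settled by the generality tie-break in \eqref{eq:closord:1}. But your resolution of the root case contains a genuine gap. You claim that ``the only way $s$ could fail to be a proper instance of $f(x,y)$ is for the clause itself to be a renaming of an AC axiom,'' so that the side condition of \eqref{eq:acjoin:pos} rescues you. This is false, for two reasons. First, the root step need not use commutativity: it may use \eqref{eq:ac2} or \eqref{eq:ac3}, whose left-hand sides have nested structure, and a term can match them \emph{only after grounding}. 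Concretely, take the unit clause $f(y,a) \beq f(a,y)$, which is AC-joinable, is not in $AC_f$, and is not a renaming of any axiom. Under $\theta = (y \bmapsto f(b,c))$ the larger side $f(f(b,c),a)$ must be rewritten at the root by associativity, but $f(y,a)$ and $f(f(x,y),z)$ are $\incomparable$ under $\lessgeneq$ (neither is an instance of the other), so the tie-break in \eqref{eq:closord:1} does not order the step closure below the target closure, and nothing can be concluded from the arbitrary total extension of $\mquote{\succ_{tc}}$. Note the failure is confined to \emph{unit} targets: for non-unit clauses $M_{cc}$ uses identity closures of ground literals, and a ground term is always a proper instance of a non-ground axiom side, so there your argument goes through.

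This is exactly the obstacle the paper's proof is engineered around, and it explains its otherwise odd-looking structure. For terms with at most three AC-subterms the paper gives an ad hoc table of all most-general equation shapes, with hand-crafted rewrite plans that terminate either in a tautology \emph{or in a smaller instance of an axiom of $AC_f$} (a terminal option your proposal never contemplates — your certificates only end at tautologies); every concrete small case, including $f(y,a) \beq f(a,y)$, is then dispatched by \Cref{lemma:ord:2a} as an (equal or more specific, hence $\succeq_{cc}$) instance of a table case rather than by direct step-by-step certification. For four or more AC-subterms the paper verifies that every rewrite in its explicit normalisation (right-associate with \eqref{eq:ac2}, then bubble-sort with \eqref{eq:ac1}/\eqref{eq:ac3}) is either at a proper subterm or by a genuinely proper instance — which is automatic there because a top term with $\ge 4$ AC-subterms always instantiates a variable of the axiom's left-hand side by an $f$-term. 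A secondary gap: you assert without proof that oriented ground AC rewriting yields a valley with all intermediate terms below $\max_{\succ_t}\{s\theta, t\theta\}$; this is not free and requires the ordering assumptions the paper makes explicit (if $s \succ_t t$ then $st \succ_t ts$ and $s(tu) \succ_t t(su)$, and $(xy)z \succ_t x(yz)$), which your proposal never states. To repair your proof you would have to restrict root steps on the maximal side to proper-instance matches and show such a valley always exists — at which point you have essentially reconstructed the paper's case split.
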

\begin{proof}
	
Let us prove rule \eqref{eq:acjoin:pos}. 
We will show how, 
if $\joinable[AC_f]st$, 
then all ground instances $\clos*{s\beq t}\theta$ 
are rewritable, via smaller instances of clauses in $AC_f$, 
to a smaller tautology or to a smaller instance of clauses in $AC_f$, 
meaning that $s\beq t$ is redundant wrt.\ closure redundancy. 
Using closure redundancy is essential, as instances of $AC_f$ axioms used in the following rewriting process can be bigger than the clause we are simplifying in the usual term ordering, but as we will see they are smaller in the closure ordering. 

For conciseness, let us denote $f(a,b)$ by $ab$ 
in the sequel. 
We will assume that the term ordering has following properties: 
if $s\succ_{t} t$ then $st \succ_{t} ts$ and $s(tu) \succ_{t} t(su)$, 
and also that $(xy)z \succ_{t} x(yz)$. 
This conditions hold for most commonly used families of orderings, such as KBO or LPO \cite{termrewriting}. 

First some definitions. 
Let $\acsubterms_f$ collect all ``consecutive'' $f$-subterms into a multiset, that is 
\begin{subequations}
\begin{align}
&\text{if $u = f(s,t)$:} & &\acsubterms_f(u) = \acsubterms_f(s) \cup \acsubterms_f(t) \,, \\
&\text{otherwise:}       & &\acsubterms_f(u) = u \,.
\end{align}
\end{subequations}
so for example $\acsubterms_f(a((bc)d)) = \{a,b,c,d\}$. 
%
Let us define $\acrewrite_f$ as follows: 
\begin{alignat}{2}
&\acrewrite_f(u) & &= u''_1(\dotsm u''_n)\qcomma{\tabularbox{%
	where $\{u_1,\dotsc,u_n\} = \acsubterms_f(u)$ \\
	and $u'_i = \acrewrite_f(u_i)$ \\
	and $\{u''_1,\dotsc\} = \{u'_1,\dotsc\}$, \\
	and $u''_1 \prec \dotsb \prec u''_n$. 
}}
\end{alignat}
such that for example if $a\prec b \prec c$ then $\acrewrite_f(\,(ba)(g(cb))\,) = a(b(g(bc)))$. 
Note that we have 
$\joinable[AC_f] st \imp \forall \theta\in\GSubs(s,t) \such~ \joinable[AC_f] {s\theta}{t\theta}$, 
and $\joinable[AC_f] st \eqv \acrewrite_f(s) = \acrewrite_f(t)$. 
Therefore we will now show how, if $\joinable[AC_f] st$, then for any ground instance $\clos*{s\beq t}\theta$, 
the closure $\clos*{s' \beq t'}\theta$, 
with $s'\theta = \acrewrite_f(s\theta) = \acrewrite_f(t\theta) = t'\theta$, 
is either an instance of $AC_f$ or a tautology, 
implied by smaller instances of clauses from $AC_f$. 
\todo{Possibly is all of this $\acrewrite_f$ stuff not necessary? Note that we are stating $\joinable[AC_f] st \eqv \acrewrite_f(s) = \acrewrite_f(t)$ without proof, so this is already some of the proof that we're skipping. So maybe we just write “we will show how we can reduce $\clos*{s\beq t}\theta$ to $\clos*{s'\beq t'}\theta$ with $s'\theta = t'\theta$”.}

\newcommand\cardinality[1]{\abs{#1}}


\newcommand\mybox[1]{\parbox[t]{0.60\textwidth}{#1}}

For the cases where $\cardinality{\acsubterms_f(s)}$ is 1, 2, or 3, 
ad-hoc proofs are required. 
%
Let $\theta$ be any grounding. 
\begin{subequations}
\begin{align}
x &\beq x & &\mybox{Tautology} \\
xy &\beq xy & &\mybox{Tautology} \\
xy &\beq yx & &\mybox{Instance of \eqref{eq:ac1}} \\
\midrule
x(yz) &\beq x(yz) & &\mybox{Tautology} \\
x(yz) &\beq y(zx) & &\mybox{%
	If $x\theta \prec z\theta$, rewrite $zx\to xz$ to get an instance of \eqref{eq:ac3}. 
	If $z\theta \prec x\theta$ and $z\theta \prec y\theta$, rewrite $y(zx)\to z(xy)$ to get an instance of \eqref{eq:acs:3}. 
	If $y\theta \prec z\theta \prec x\theta$, rewrite $x(yz)\to z(yx)$ — using smaller \eqref{eq:acs:3} — to get an instance of \eqref{eq:ac3}. 
} \label{eq:acs:1} \\
x(yz) &\beq z(xy) & &\mybox{%
	If $y\theta \prec x\theta$, rewrite $xy\to yx$ to get an instance of \eqref{eq:acs:3}. 
	If $z\theta \prec y\theta$, rewrite $yz\to zy$ to get an instance of \eqref{eq:acs:3}. 
	If $x\theta \prec y\theta \prec z\theta$, rewrite $z(xy)\to y(xz)$ — using smaller \eqref{eq:acs:3} — to get an instance of \eqref{eq:ac3}.
} \\
x(yz) &\beq y(xz) & &\mybox{Instance of \eqref{eq:ac3}} \\
x(yz) &\beq x(zy) & &\mybox{%
	If $y\theta \prec z\theta$, rewrite $zy \to yz$. 
	If $z\theta \prec y\theta$, rewrite $yz\to zy$. 
	In both cases, we reach a tautology.
	} \label{eq:acs:2} \\
x(yz) &\beq z(yx) & &\mybox{%
	If $z\theta \prec y\theta$ and $x\theta \prec y\theta$, rewrite $yz\to zy$ and $yx\to xy$ to get an instance of \eqref{eq:ac3}. 
	If $y\theta \prec z\theta$ and $y\theta \prec x\theta$, rewrite $z(yx)\to y(zx)$ to get an instance of \eqref{eq:acs:3}. 
	If $x\theta \prec y\theta \prec z\theta$, rewrite on the right: $yx\to xy$, then $z(xy)\to x(zy)$, then $zy\to yz$ to obtain a tautology. 
	If $z\theta \prec y\theta \prec x\theta$, rewrite on the left: $yz\to zy$, then $x(zy)\to z(xy)$, then $xy\to yx$ to obtain a tautology. 
} \label{eq:acs:3} \\
\midrule
(xy)z &\beq x(yz) & &\mybox{Instance of \eqref{eq:ac2}} \\
(xy)z &\beq y(zx) & &\mybox{%
	If $x\theta \prec y\theta$, rewrite $(xy)z\to x(yz)$ to get an instance of \eqref{eq:acs:1}. 
	If $y\theta \prec x\theta$, rewrite $xy\to yx$ to get $(yx)z\beq y(zx)$, rewrite $(yx)z\to y(xz)$ to get an instance of \eqref{eq:acs:2}.
	} \\
(xy)z &\beq z(xy) & &\mybox{Instance of \eqref{eq:ac1}} \\
(xy)z &\beq y(xz) & &\mybox{%
	If $x\theta \prec y\theta$, rewrite $y(xz)\to x(yz)$. 
	If $y\theta \prec x\theta$, rewrite $xy\to yx$. 
	In both cases, we reach an instance of \eqref{eq:ac2}.
	} \\
(xy)z &\beq x(zy) & &\mybox{%
	If $y\theta \prec z\theta$, rewrite $zy\to yz$ to get an instance of \eqref{eq:ac2}. 
	If $z\theta \prec y\theta$, rewrite $(xy)z \to z(xy)$ (via a proper instance of $xy\beq yx$, that is) to get an instance of \eqref{eq:ac3}.
	} \\
(xy)z &\beq z(yx) & &\mybox{%
	If $x\theta \prec y\theta$, rewrite $yx \to xy$.
	If $y\theta \prec x\theta$, rewrite $xy\to yx$. 
	In both cases, we reach an instance of \eqref{eq:ac1}.
	} 
\end{align}
\end{subequations}
Then by \Cref{lemma:ord:2a} all cases with $\cardinality{\acsubterms(s)} \le 3$ follow, 
since they will be an (equal or more specific) instance of some such case. 
\AD{Is it clear what is meant here? I mean if we prove all $\clos*{x(yz)\beq (yx)z}\theta$ follow from smaller instances then by \Cref{lemma:ord:2a} all $\clos{(x(yz)\beq (yx)z)\rho}\theta$ will also follow, and that covers all equations with $\#\acsubterms \le 3$}


For the cases with $\cardinality{\acsubterms_f(s)} \ge 4$, 
consider any ground instance $\clos*{s\beq t}\theta$. 
First, exhaustively apply the rule $(xy)z \to x(yz)$ 
on all subterms of $s \beq t$. 
Since $(xy)z \succ x(yz)$, 
$s \succeq s'$ and $t \succeq t'$, 
then (\Cref{lemma:ord:rewrite-rel}) 
$\clos*{s\beq t}\theta \succeq \clos*{s'\beq t'}\theta$. 
In order to show that $\clos*{s'\beq t'}\theta$ and $AC_f$ make $\clos*{s\beq t}\theta$ redundant, 
it remains to be shown that these rewrites were done by instances of \eqref{eq:ac2} which are also 
smaller than $\clos*{s\beq t}\theta$. 

Since $\cardinality{\acsubterms_f(s)} \ge 4$, 
then any $s$ or $t$ where we can rewrite with $(xy)z\bto x(yz)$ is in one of the following forms: 
(i) $(a_1a_2)(a_3a_4)$, 
in which case we can use an identical argument to encompassment demodulation since $(a_1a_2)(a_3a_4) \lessgen (xy)z$, 
or (ii) $a_1a_2$ with the term being rewritten being 
$a_2$ or a subterm thereof, 
in which case the rewrite is also by a smaller instance.  

After this, $s'$ and $t'$ are of the form 
$a_1(\dotsm a_n)$. 
Now, since the closure is ground, 
for every adjacent pair of terms either 
$a_i\theta \prec a_{i+1}\theta$ or 
$a_i\theta \succ a_{i+1}\theta$ or 
$a_i\theta = a_{i+1}\theta$. 
This means we can always instantiate and apply one of \eqref{eq:ac1} or \eqref{eq:ac3} 
\KK{[This is next paragraph right?] is instantiation proper ?}
and ``bubble sort'' the AC terms until they become 
$a_1'(\dotsm a_n')$ with $a_1'\theta \prec \dotsb \prec a_n'\theta$, 
where there is a bijection between 
$\{a_1,\dotsc,a_n\}$ and $\{a_1',\dotsc,a_n'\}$, 
obtaining an $a_1'\theta(\dotsm a_n'\theta) \preceq a_1\theta(\dotsm a_n\theta)$. 

Once again, these rewrites are done via smaller instances of $AC_f$, since 
we either 
rewrite with \eqref{eq:ac1} on a subterm, in the case of $a_{n-1}/a_n$, 
or with \eqref{eq:ac3} on a subterm, in the case of $a_i/a_{i+1}$ with $2 \le i \le n-1$, 
or with \eqref{eq:ac3} on a less general term, in the case of $a_1/a_2$. 

The process we have just described is done bottom-up on terms 
(meaning for instance $f(g(f(b,a)),c) \to f(g(f(a,b)),c) \to f(c,g(f(a,b)))$). 
Obviously, the rewrites on inner $f$-subterms are trivially done by smaller instances. 

This concludes the process. 
Applying this on both sides yields the closure
$\clos*{s' \beq t'}\theta$ 
with $s'\theta = \acrewrite_f(s\theta)$ and $t'\theta = \acrewrite_f(t\theta)$, 
which we have shown is $\preceq \clos*{s\beq t}\theta$ 
and follows from it by smaller closures in $\GClos(AC_f)$. 
\todo{For example here simply note that we get a closure $\clos*{s'\beq t'}\theta$ with $s'\theta = t'\theta$, and that's it, no need to invoke $\acrewrite_f$.}
This can be done for all $\theta \in \GSubs(s,t)$. 
Thus $\acrewrite_f(s,\theta) \beq \acrewrite_f(t,\theta)$, a tautology, 
makes clause $s\beq t$ redundant, 
meaning any $s\beq t \vee C$ 
is redundant in $AC_f$. 
The same process proves rule \eqref{eq:acjoin:neg}. 
\qed\end{proof}


%


\subsection*{AC normalisation}

We will now show some examples to motivate another simplification rule. 
Assume $a\prec b\prec c$. 
The demodulation rule already enables us to rewrite any occurrence of, for instance, 
$b(ca)$, or $(ac)b$ or any other such permutation, to $a(bc)$. 
However, take the term $b(xa)$. 
It cannot be simplified by demodulation. 
Yet it is easy to see that in any instance of a clause where it appears, 
it can be rewritten 
to a smaller $a(xb)$ 
via smaller instances of clauses in $AC_f$. 

Such cases motivate the following simplification rule.%
\footnote{Note we trivially assume all $AC_f$ terms are right associative, since $(xy)z\bto x(yz)$ is always oriented.}
\begin{align}
&\text{AC norm.} &
&\vcenter{\prftree{\cancel{C[t_1(\dotsm t_n)]}}{AC_f}{C[t'_1(\dotsm t'_n)]}}
\qcomma{\tabularbox{%
	where $t_1,\dotsc,t_n \succ_\text{lex} t'_1,\dotsc,t'_n$ \\
	and $\{t_1,\dotsc,t_n\} = \{t'_1,\dotsc,t'_n\}$ 
}}
\end{align}

\begin{theorem}\label{thm:ac:norm} 
AC normalisation is a sound and admissible simplification rule wrt.\ closure redundancy. 
\end{theorem}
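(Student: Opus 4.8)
The plan is to treat soundness and admissibility separately, reducing admissibility to a grounding-wise closure comparison that reuses the machinery of \Cref{thm:ac:join} and \Cref{th:encom:demod}. Soundness is immediate: since $\{t_1,\dotsc,t_n\} = \{t'_1,\dotsc,t'_n\}$, the terms $t_1(\dotsm t_n)$ and $t'_1(\dotsm t'_n)$ are AC-equal, so $AC_f$ entails their equality and, by congruence, $C[t_1(\dotsm t_n)]$ together with $AC_f$ entails $C[t'_1(\dotsm t'_n)]$. For admissibility, by the definition quoted in \Cref{th:encom:demod} it suffices to show that the struck premise $C[t_1(\dotsm t_n)]$ is closure redundant wrt.\ the conclusion $C[t'_1(\dotsm t'_n)]$ and $AC_f$; that is, for every grounding $\rho$ the closure $\clos{C[t_1(\dotsm t_n)]}\rho$ must follow from strictly smaller closures in $\GClos(\{C[t'_1(\dotsm t'_n)]\} \cup AC_f)$.

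Fix such a $\rho$. I would first show that the conclusion closure is smaller, i.e.\ $\clos{C[t'_1(\dotsm t'_n)]}\rho \prec_{cc} \clos{C[t_1(\dotsm t_n)]}\rho$. Because $\succ_t$ is stable under substitution, the hypothesis $t_1,\dotsc,t_n \succ_\text{lex} t'_1,\dotsc,t'_n$ carries over to the ground tuples $t_1\rho,\dotsc$ and $t'_1\rho,\dotsc$; using the assumed ordering properties on $f$ (the same ones invoked in \Cref{thm:ac:join}, namely that a larger head yields a larger right-associated term) this lexicographic inequality gives $t_1(\dotsm t_n)\rho \succ_t t'_1(\dotsm t'_n)\rho$. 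Replacing the subterm then yields the required closure inequality by \Cref{lemma:ord:rewrite-rel}.

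It remains to supply the bridging $AC_f$ instances and to check that they too lie strictly below $\clos{C[t_1(\dotsm t_n)]}\rho$. Here I would not permute $t_1(\dotsm t_n)\rho$ directly into $t'_1(\dotsm t'_n)\rho$, but instead apply the normalisation procedure of \Cref{thm:ac:join} to \emph{both} ground terms, rewriting each down to the common normal form $\acrewrite_f(t_1(\dotsm t_n)\rho) = \acrewrite_f(t'_1(\dotsm t'_n)\rho)$ along $\succ_t$-decreasing chains of instances of \eqref{eq:ac1} and \eqref{eq:ac3}. Because both chains are decreasing, every term occurring in either chain is $\preceq_t t_1(\dotsm t_n)\rho$, and the composite sequence of equalities establishes $t_1(\dotsm t_n)\rho = t'_1(\dotsm t'_n)\rho$; by congruence the conclusion closure together with these $AC_f$ instances entails $\clos{C[t_1(\dotsm t_n)]}\rho$.

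The main obstacle is the closure-ordering check on these $AC_f$ instances, which is exactly where plain redundancy fails and closure redundancy is needed. Each instance is a ground positive unit whose two sides are $\preceq_t t_1(\dotsm t_n)\rho$. Let $L_0$ be the literal of $C$ carrying the rewritten subterm and $u$ the side of $L_0$ containing it, so that $u\rho \succeq_t t_1(\dotsm t_n)\rho$. If $C$ is non-unit, then $M_{cc}(\clos{C[t_1(\dotsm t_n)]}\rho)$ contains $\clos{L_0\rho}\id$, whose identity-closure side $\clos{u\rho}\id$ is, by the tie-break in $\succ_{tc}$ and \Cref{lemma:ord:id-nid}, strictly above every term closure of any bridging instance (identity closures being maximal among closures of a fixed ground term, cf.\ \Cref{lemma:ord:2a}); hence the instance is dominated. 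The delicate case is when $C$ is unit and the swap is at the top position, so the instance side equals $u\rho = t_1(\dotsm t_n)\rho$: there the instance is a \emph{proper} instance of an $AC_f$ axiom, so its side closure $\clos{x(yz)}\sigma$ is strictly below $\clos{u}\rho$ in $\succ_{tc}$ precisely because the axiom pattern is more general — this is the encompassment argument of \Cref{th:encom:demod}. In all remaining sub-cases the rewritten subterm is a strict subterm of $u\rho$, hence strictly smaller, and domination is immediate. Collecting these, every bridging instance lies in $\GClos(AC_f)$ strictly below $\clos{C[t_1(\dotsm t_n)]}\rho$, which, together with the smaller conclusion closure, witnesses closure redundancy of the struck premise for this $\rho$; since $\rho$ was arbitrary, the rule is admissible.
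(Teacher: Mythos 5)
Your overall architecture is sound and close in spirit to the paper's: soundness via AC-equality, the conclusion-closure comparison via \Cref{lemma:ord:rewrite-rel}, bridging the premise and conclusion by $\succ_t$-decreasing chains of ground $AC_f$ instances as in \Cref{thm:ac:join}, and — correctly — isolating the unit clause with a top-position rewrite as the only delicate case. Your non-unit argument (domination via the identity literal closures in $M_{cc}$) and your strict-subterm argument are both fine.

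The gap is in the delicate case itself. You dismiss it on the grounds that the bridging instance ``is a \emph{proper} instance of an $AC_f$ axiom''. But that is automatic — the patterns $xy$ and $x(yz)$ are non-ground and the redex is ground — and it is not the condition the closure ordering needs. With $C$ unit and the redex at the top of the maximal side $u$, the comparison is between $\clos{u}{\rho}$, whose first component is the clause's \emph{own} term $u$ (not the ground term $u\rho$), and $\clos{x(yz)}{\sigma}$; both represent the same ground term, so by the tie-break in \eqref{eq:closord:1} strictness requires $u \lessgen x(yz)$, i.e.\ that $u$ itself be a proper instance of the axiom's left-hand side. This is the actual encompassment condition of \Cref{th:encom:demod} (there written $l\theta \lessgen l$), and it fails exactly when $u$ is a variable-renamed copy of $xy$ or $x(yz)$: the two closures then tie in the partial order, and the arbitrary total extension may order them the wrong way. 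Your proof never excludes this situation, and excluding it is precisely the content of the paper's proof: for chains of two or three pairwise distinct variables nothing ever needs rewriting, because the side condition $t_1,\dotsc,t_n \succ_\text{lex} t'_1,\dotsc,t'_n$ cannot hold between incomparable variables; and for longer chains every swap of a pure variable pair can be simulated by a detour using only proper instances of \eqref{eq:ac1} and \eqref{eq:ac3} (the ``move $t$ right, swap, move back'' sequence for four subterms, lifted to five and, at subterm positions, to six or more). In your ground-chain setup the lighter observation suffices — only the \emph{first} top-position redex in a decreasing chain equals $u\rho$, all later ones are strictly below it — but you must still argue, via the lex side condition, that this first step is never forced on a pure-variable pattern; as written, the proof has a hole at exactly the point the paper's case analysis is designed to plug.
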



\begin{proof} 
The conclusion is smaller than or equal to the premise. 
Furthermore, the instances of \eqref{eq:ac1} and \eqref{eq:ac3} used to rewrite $t_1(\dotsb t_n)$ into $t'_1(\dotsb t'_n)$ are always proper instances; 
\AD{This whole part probably needs to be clarified. Not sure how best to write it.}
to see why, consider the cases where we would need to rewrite an occurrence of $xy$ or $x(yz)$ (all distinct variables). 
2 subterms: 
\AD{Define what this means? Means $\#\acsubterms_f(\text{term}) = 2$}
$xy$, and there is nothing to rewrite since $xy \not\succ_\text{lex} yx$. 
3 subterms: 
if the term is $s(yz)$, with $y,z$ variables and $s$ any term, there is also nothing to rewrite for the same reason. 
4 subterms: 
it would be admissible to rewrite 
$t(s(xy))$ to $s(t(yx))$ 
if $s \prec t$, 
but this can be done without using $xy\bto yx$ directly, 
by ``moving'' $t$ to the right, using proper instances of $AC_f$ to swap $x$ and $y$, then ``moving'' $t$ back to the desired place (e.g.\ $t(s(xy)) \to s(t(xy)) \to s(x(ty)) \to s(x(yt)) \to s(y(xt)) \to s(y(tx)) \to s(t(yx))$). 
5 subterms: 
the term has form $t(s(x(yz)))$, 
so an identical process as for the 4 subterm case applies. 
6 subterms and more:
the term has form $u(t(s(x(yz))))$, 
so the case for 5 subterms appears at a subterm position. 
\qed\end{proof}

In practice, this criterion can be implemented by applying the following function 
\begin{alignat}{2}
&\acnormalise_f(s_1(\dotsb s_n)) & {}&={} \tabularbox{%
	let $\acsort_f(\acnormalise_f(s_1),\dotsc,\acnormalise_f(s_n)) = (s'_1,\dotsc,s'_n)$ \\
	in $s'_1(\dotsm s'_n)$
} \nonumber \\
&\acnormalise_f(g(t_1,\dotsc,t_n)) & {}&={} g(\acnormalise_f(t_1),\dotsc,\acnormalise_f(t_n)) \qc{\text{if $g \ne f$}}
\end{alignat}
to all literals in the clause, where \AD{Better notation here}
\begin{align}
\acsort_f(s_1,\dotsc,s_n) = \begin{dcases}
	s_k \concat \acsort^*_f (s_1,\dotsc,s_n \setminus s_k) 
		&\tabularbox{%
			if $\Exists s_k \in \{s_1,\dotsc,s_n\} \such~ s_k \prec_{t} s_1$ \\
			and $s_k$ minimal in $\{s_1,\dotsc,s_n\}$
		} \\
	s_1 \concat \acsort_f (s_2,\dotsc,s_n) 
		&\text{otherwise}
\end{dcases}
\end{align}
and $\acsort^*_f$ orders the list of terms using some total extension of the term ordering. 

Some examples, assume $g(\ldots) \succ b \succ a$: 
\KK{[FIXED] define ordering $g(x) \succ b\succ a$} 
\KK{[FIXED] why the last example is valid ?}
\begin{subequations}
\begin{align}
b(xa) &\to a(xb) \\
x(ba) &\to x(ab) \\
g(x) \, (a x) &\to a (x \, g(x)) \\
g(bx) \, g(ba) &\to g(ab) \, g(bx)
\end{align}
\end{subequations}
note the rhs may not be unique 
(e.g.\ in the first and third), 
since we are free to extend the term ordering in any (consistent) way. 

The main advantages of applying this simplification rule are 
\begin{itemize}[wide]
\item 
Strictly more redundant clauses found. 
For example, in the set 
$\{ a(bx) , a(xb) ,\allowbreak x(ab) , b(xa) , b(ax) , x(ba) \}$, 
the latter three are redundant, instead of only the latter one. 

\item 
Faster implementation. 
Even for simplifications that were already allowed by demodulation, 
we avoid the work of searching in indices and instantiating the axioms to perform the rewrites. 
%
Also, we can avoid storing $AC_f$ in the demodulation indices entirely. 
Since \eqref{eq:ac1} matches with all $f$-terms, and \eqref{eq:ac3} with all $f$-terms with 3 or more elements, 
this makes all queries on those indices faster. 
\end{itemize}


\section{Experimental results}\label{sec:experiment}
We implemented the simplifications developed in this paper —
encompassment demodulation, AC joinability and AC normalisation — in a theorem prover for first-order logic, iProver~\cite{iprover-systemdesc,cade2020}.\footnote{iProver is available at \url{http://www.cs.man.ac.uk/~korovink/iprover}} 
iProver combines superposition  with Inst-Gen and resolution calculi. For superposition iProver implements a range of simplifications including demodulation, light normalisation, subsumption and subsumption resolution. 
We run our experiments over FOF problems of the TPTP v7.4 library~\cite{tptp} (\num{17053} problems) on a cluster of Linux servers with \SI{3}{GHz} 11 cores AMD CPUs, \SI{128}{GB} memory, each problem was running on a single core with time limit \SI{300}{s}.


In total iProver solved \num{10358} problems.  
Encompassment demodulation (excluding cases when usual demodulation is applicable) was used in \num{7283} problems, $\geq$ \num{1000} times in \num{2343} problems, $\geq$ \num{10000} in 1018 problems, and $\geq$ \num{100 000} in 272 problems. 
This is in addition to other places where usual demodulation is valid but an expensive ordering check is skipped. 

There are \num{1366} problems containing 1 to 6 AC symbols, as detected by iProver.
AC normalisation was applied in \num{1327} of these: $\geq \num{1000}$ times in \num{1047} problems, $\geq \num{10000}$ times in 757 problems; and  $\geq \num{100000}$ times in \num{565} problems.
AC joinability was applied in \num{1138} problems: $\geq \num{1000}$ times in \num{646}, $\geq \num{10000}$ times in \num{255} problems. 
%
We can conclude that new simplifications described in this paper were applicable in a large number of problems 
and were used many times. 

\todo{Generally, it would be good to have more experimental results: number of newly solved problems}

\section{Conclusion and future work}\label{sec:discussion}
In this paper we extended the AC joinability criterion 
 to the superposition calculus for full first-order logic. 
For this we introduced a new closure-based redundancy criterion and proved that it preserves completeness. 
Using this criterion we proved that AC joinability and AC normalisation simplifications preserve completeness of the superposition calculus. 
Using these results, superposition provers for full first-order logic can incorporate AC simplifications without compromising completeness. 
Moreover, we extended demodulation to encompassment demodulation, which enables simplification of more clauses (and faster), independent of AC theories. 

We believe that the framework of closure redundancy can be used to prove many other interesting and useful redundancy criteria. 
For future work we are currently exploring other such applications, 
including more AC simplifications 
as well as general ground joinability criteria which can be incorporated in our framework. 



\bibliographystyle{plain}
\bibliography{bibliography}

\end{document}